\newtheorem{theorem}{Theorem}[section]
\newtheorem{lemma}[theorem]{Lemma}
\newtheorem{proposition}[theorem]{Proposition}
\newtheorem{corollary}[theorem]{Corollary}
\newtheorem{definition}[theorem]{Definition}
\newtheorem{remark}[theorem]{Remark}
\newcommand{\tb}{\textbf}
\begin{document}
\title{New bounds and constructions for constant weighted $X$-codes}
\author{Xiangliang Kong, Xin Wang and Gennian Ge

\thanks{X. Kong ({\tt 2160501011@cnu.edu.cn}) and  G. Ge ({\tt gnge@zju.edu.cn}) are with the School of Mathematical Sciences, Capital Normal University, Beijing 100048, China. The research of G. Ge was supported by the National Natural Science Foundation of China under Grant No. 11971325, National Key Research and Development Program of China under Grant Nos. 2020YFA0712100 and 2018YFA0704703, and Beijing Scholars Program.}
\thanks{X. Wang ({\tt xinw@suda.edu.cn}) is with the School of Mathematical Sciences, Soochow University, Suzhou 215006, China. The research of X. Wang was supported by the National Natural Science Foundation of China under Grant No. 11801392 and the Natural Science Foundation of Jiangsu Province under Grant No. BK20180833.}
}

\maketitle

%\footnotetext{Copyright (c) 2017 IEEE. Personal use of this material is permitted.  However, permission to use this material for any other purposes must be obtained from the IEEE by sending a request to pubs-permissions@ieee.org.}

\begin{abstract}
  As a crucial technique for integrated circuits (IC) test response compaction, $X$-compact employs a special kind of codes called $X$-codes for reliable compressions of the test response in the presence of unknown logic values ($X$s). From a combinatorial view point, Fujiwara and Colbourn \cite{FC2010} introduced an equivalent definition of $X$-codes and studied $X$-codes of small weights that have good detectability and $X$-tolerance.

  An $(m,n,d,x)$ $X$-code is an $m\times n$ binary matrix with column vectors as its codewords. The parameters $d,x$ correspond to the test quality of the code. In this paper, bounds and constructions for constant weighted $X$-codes are investigated. First, we obtain a general result on the maximum number of codewords $n$ for an $(m,n,d,x)$ $X$-code of weight $w$, and we further improve this lower bound for the case with $x=2$ and $w=3$ through the probabilistic method. Then, using tools from additive combinatorics and finite fields, we present some explicit constructions for constant weighted $X$-codes with $d=3,7$ and $x=2$, which are optimal for the case when $d=3, w=4$ and nearly optimal for the case when $d=3,w=3$. We also consider a special class of $X$-codes introduced in \cite{FC2010} and improve the best known lower bound on the maximum number of codewords for this kind of $X$-codes.
\end{abstract}

\begin{IEEEkeywords}
circuit testing, constant weighted $X$-codes, additive combinatorics, hypergraph independent set, $r$-even-free triple packing.
\end{IEEEkeywords}

\IEEEpeerreviewmaketitle

\section{Introduction}\label{secintro}

Typical digital circuit testing applies test patterns to the circuit and observes the circuit's responses to the applied patterns. The observed response to a test pattern is compared with the expected response, and a chip in the circuit is determined to be defective if the comparison mismatches. With the development of the large scale integrated circuits (IC), although the comparison for each testing output is simple, the ever increasing amount of testing data costs much more time and space for processing. This leads to the requirement of more advanced test compression techniques \cite{MBKMPRW2003}. Since then, various related techniques have been studied such as automatic test pattern generation (ATPG) (see \cite{LC1985,SB1988,Hollmann1990,CZ1994} and the reference therein) and compression-based approaches (e.g., \cite{PLR2003,MLMP2005}).

Usually, voltages on signal lines in digital circuit system are interpreted as logic values $0$ or $1$. Based on this, by applying test patterns through fault-free simulations of the circuit, test engineers obtain the expected responses which are captured as $\{0,1\}$ vectors. Then, same test patterns are applied to the circuit and the circuit is declared to be defective if testing outputs are different from the expected responses. However, due to timing constraints, uninitialized memory elements, bus contention, inaccuracies of simulation models, etc (see Table 2 in \cite{MLMP2005}), for many digital systems, some simulated responses cannot be uniquely determined as $0$ or $1$ state. These unknown states are modeled as ``$X$'' states. In the presence of $X$s, the technique of $X$-compact was proposed in \cite{MK2004} as one of the compression-based approaches that have high reliability and error detection ability in actual digital systems.

%the expected responses are viewed as three valued $\{0,1,X\}$-vectors while the testing outputs are still viewed as $\{0,1\}$-vectors.

$X$-compact uses $X$-codes as linear maps to compress test responses. An $(m,n,d,x)$ $X$-code is a set of $m$-dimensional $\{0,1\}$-vectors of size $n$ which can also be viewed as an $m\times n$ binary matrix with column vectors as codewords. The parameters $d,x$ correspond to the test quality of the code. The weight of a codeword $\mathbf{c}$ is the number of $1$s in $\mathbf{c}$. The value of $\frac{n}{m}$ is called the \emph{compaction ratio} and $X$-codes with large \emph{compaction ratios} are desirable for actual IC testing.

Let $M(m,d,x)$ be the maximum number $n$ of codewords for which there exists an $(m,n,d,x)$ $X$-code. To obtain $X$-codes with large compaction ratios, studies of the behavior of $M(m,d,x)$ are unavoidable. In \cite{FC2010}, based on a combinatorial approach, Fujiwara and Colbourn obtained a general lower bound $2^{{\frac{m}{2^{x+1}(d+x)}}}$ on $M(m,d,x)$ using probabilistic method (see Theorem 4.6, \cite{FC2010}). And this lower bound was further improved to $e^{{\frac{m-c_0}{e(x+1)(d+x-1)}}}$ by Tsuboda et al. in \cite{TFAV2018}.

In this paper, we focus on $X$-codes of constant weight. In coding theory, as an important class of codes, general constant weighted codes have been extensively studied for decades. They have played crucial roles in a number of engineering applications, including code-division multiple-access (CDMA) systems for optical fibers \cite{CSW1989}, protocol design for the collision channel without feedback \cite{GM1992}, automatic-repeat-request error-control systems \cite{WY1994}, and parallel asynchronous communication \cite{BB2000}. For the study of constant weighted codes, we recommend \cite{AVZ2000} and the reference therein.

As for $X$-codes, the weight of a codeword corresponds to the required fan-out of the $X$-compactor. Note that for an $X$-compactor, larger fan-in increases power requirements, area, and delay \cite{WH2003}. Due to the large amount of connections between $X$-compactors and inputs \cite{MK2004}, compactors with smaller fan-out inputs shall reduce fan-in values. From this point of view, codewords in $X$-codes are expected to have small weights. Therefore, $X$-codes of constant weight can be a good starting point for the study.

Firstly, in \cite{MLMP2005}, stochastic coding techniques are employed to design constant weighted $X$-compactors. For $x=1$, by viewing the matrix of an $(m,n,d,1)$ $X$-code as an incidence matrix of a graph, Wohl and Huisman \cite{WH2003} built a connection between this kind of $X$-codes with constant weight $2$ and graphs with girth at least $d+2$. For cases with multiple $X$s, given an $(m,n,d,x)$ $X$-code, Fujiwara and Colbourn \cite{FC2010} showed that a codeword of weight less than or equal to $x$ does not essentially contribute to the compaction ratio (see also \cite{LM2003}). Since then, aiming to achieve a large compaction ratio while minimizing the weight of each codeword, many works have been done about $(m,n,d,x)$ $X$-codes of constant weight $x+1$. Let $M_w(m,d,x)$ be the maximum number $n$ of codewords for which there exists an $(m,n,d,x)$ $X$-code of constant weight $w$. Using results from combinatorial design theory and superimposed codes, Fujiwara and Colbourn \cite{FC2010} proved that $M_3(m,d,2)=O(m^2)$ and $M_3(m,1,2)=\Theta(m^2)$. And they studied a special class of $(m,n,1,2)$ $X$-codes of constant weight $3$ with a property that boosts test quality when there are fewer unknowable bits than anticipated. In \cite{TF2018}, Tsunoda and Fujiwara proved that $M_3(m,d,2)=o(m^2)$ for $d\geq 4$ and they also improved the lower bound on the maximum number of codewords for the above special class of $(m,n,1,2)$ $X$-codes of constant weight $3$ introduced in \cite{FC2010}.

%For constant weighted $X$-codes, Since factors like power requirements, compactor delay and wireability require the weight of each codeword to be small to meet the practical limitations (see \cite{MLMP2005,WH2003}) and codewords with weight at most $x$ are not essential when considering the compaction ratio (see \cite{LM2003,FC2010}).
%Very recently, Tsunoda and Fujiwara \cite{TF2019} studied $X$-codes of constant weight $3$ and showed that
%\begin{equation*}
%M_3(m,d,2)=\begin{cases}
%\Omega(m^{\frac{4}{3}})~\text{for}~d=2;\\
%\Omega(m^{\frac{5}{4}})~\text{for}~d=3;\\
%\Omega(m^{\frac{6}{5}})~\text{for}~d\geq 4.
%\end{cases}
%\end{equation*}

In this paper, based on the results from superimposed codes, additive combinatorics and extremal graph (hypergraph) theory, we obtain the following results:
\begin{itemize}
  \item General lower and upper bounds for $M_w(m,d,x)$.
%        \begin{equation*}
%         M_{w}(m,d,x)\geq(1-o(1))\frac{{m\choose {\lceil w/(x+d-1)\rceil}}}{{w\choose {\lceil w/(x+d-1)\rceil}}}.
%        \end{equation*}
  \item Explicit constructions for constant weighted $X$-codes with $d=3,7$ and $x=2$. These constructions further improve the general lower bound by providing a nearly optimal lower bound $m^{2-\varepsilon}$ for $M_{3}(m,3,2)$ and an optimal lower bound $c'm^{2}$ for $M_{4}(m,3,2)$, when $m$ is large enough.
  \item An improved lower bound for $X$-codes of constant weight $3$ with $x=2$ for any $d$:
        \begin{equation*}
         M_{3}(m,d,2)\geq c\cdot m^{\frac{9}{7}},
        \end{equation*}
        for some absolute constant $c>0$.
  \item An improvement of $({\log m})^{\frac{1}{5}}$ of the best known lower bound on the maximum number of codewords for the special class of $(m,n,1,2)$ $X$-codes of constant weight $3$ introduced in \cite{FC2010}. This improvement is also extended to the general case where higher error tolerances are required.
\end{itemize}

This paper is organised as follows: In Section II, we list some necessary notations and introduce the combinatorial requirements and the definitions of $X$-codes, we also include a lower bound for hypergraph independent sets preparing for proofs in Section IV. In Section III, we investigate the bounds and constructions for constant weighted $X$-codes. We prove a general result on $M_w(m,d,x)$ and a non-trivial lower bound on $M_3(m,d,2)$. We also present some explicit constructions for constant weighted $X$-codes with $d=3,7$ and $x=2$ based on the results from additive combinatorics and finite fields. In Section IV, we improve the lower bound on the maximum number of codewords for a special class of $(m,n,1,2)$ $X$-codes of constant weight $3$ and extend this result to a general case. In Section V, we conclude our work with some remarks.

\section{Preliminaries}\label{secpre}

\subsection{Notation}

We use the following notations throughout this paper.

\begin{itemize}
  %\item Let $[n]=\{1,2,\cdots,n\}$ and $[a,b]=\{a,a+1,\cdots,b\}$ if $a\leq b$ are two integers.
  \item Let $q$ be the power of a prime $p$, $\mathbb{F}_q$ be the finite field with $q$ elements, $\mathbb{F}_{q}^{n}$ be the vector space of dimension $n$ over $\mathbb{F}_q$.% and $\mathbb{F}_{q}^{m\times n}$ be the collection of all $m\times n$ matrices with elements in $\mathbb{F}_q$.
  \item For any integer $n>0$, denote $[n]$ as the set of the first $n$ consecutive positive integers $\{1,2,\ldots,n\}$.
  \item For any vector $\tb{v}=(v_1,\cdots,v_n)\in \mathbb{F}_{q}^{n}$, let ${\rm supp}(\tb{v})=\{i\in [n] : v_i\neq 0\}$ and ${W}(\tb{v})=|{\rm supp}(v)|.$ For a set $S\subseteq[n],$ define $\tb{v}|_S=(v_{i_1},\ldots,v_{i_{|S|}})$, where $i_j\in S$ for $1\le j\le |S|$ and $1\le i_1<\cdots<i_{|S|}\le n$.
  %\item $d(\tb{u},\tb{v})$ stands for Hamming distance between any two vectors $\tb{u},\tb{v} \in \mathbb{F}_{q}^{n}$.
  %\item $[n,k,d]_q$ denotes a linear code of length $n$, dimension $k$ and distance $d$ over the field $\mathbb{F}_q$. We will write $[n,k,d]$ instead of $[n,k,d]_q$ when the particular choice of the field is not important.
  %  \item Let $\mathcal{C}$ be an $[n,k,d]$ code and $S\subseteq[n]$, $|S|=k$. We say that $S$ is an information set if the restriction $\mathcal{C}|_{S}=\mathbb{F}_q^k$.
  %  \item Let $U\subseteq\mathbb{F}_{q}^{n}$ be a set and $\tb{v}$ be a vector. Define $d(U,\tb{v})=\min\{d(\tb{u},\tb{v})| \tb{u}\in U\}.$
  %  \item An $[n,k,d]$ code is called Maximum Distance Separable (MDS) if $d=n-k+1$. Particularly, an $[n,k,d]$ code is MDS if and only if every subset of its $k$ coordinates is an information set. Alternatively, an $[n,k,d]$ code is MDS if and only if it corrects any collection of $(n-k)$ simultaneous erasures (see \cite{Macwilliams1977}).
  \item For positive integer $k\geq 1$, a subset $P\subseteq\mathbb{F}_q$ of size $k$ is called an \emph{arithmetic progression of length $k$} if it has the form: $P=\{x+ia: x,a\in \mathbb{F}_q \text{ and } 0\leq i\leq k-1\}$. For simplicity, we denote \emph{$k$-AP} as the shortened form of \emph{arithmetic progression of length $k$}.
  \item  For functions $f=f(n)$ and $g=g(n)$, we use standard asymptotic notations $\Omega(\cdot)$, $\Theta(\cdot)$, $O(\cdot)$ and $o(\cdot)$ as $n\rightarrow \infty$:
    \begin{equation*}
      \begin{cases}
      f=O(g),~\text{if $\exists$ a constant}~c_1~\text{such that}~|f|\leq c_1|g|;\\
      f=\Omega(g),~\text{if $\exists$ a constant}~c_2~\text{such that}~|f|\geq c_2|g|;\\
      f=\Theta(g),~\text{if}~f=O(g)~\text{and}~f=\Omega(g);\\
      f=o(g),~\text{if}~\lim\limits_{n\rightarrow \infty}\frac{f}{g}=0.
    \end{cases}
    \end{equation*}
\end{itemize}

\subsection{$X$-Codes and Digital System Test Compaction}

To describe the behavior of unknown value $X$s, operations including addition (XOR) and multiplication (AND) for the $3$-valued logic system ($0$, $1$ and $X$) are formulated as $X$-$algebra$ by Fujiwara and Colbourn \cite{FC2010}: The $X$-algebra $\mathbb{X}_2=(\{0,1,X\},+,\cdot)$ over $\mathbb{F}_2$ is the set $\{0,1\}\subseteq \mathbb{F}_2$ and a third element $X$, equipped with two binary operations ``$+$'' (addition) and ``$\cdot$'' (multiplication) satisfying:
\begin{equation}\label{x_algebra}
\begin{cases}
1)\text{~For~}a,b\in \mathbb{F}_2, ~a+b \text{~and~} a\cdot b \text{~are performed in~} \mathbb{F}_2;\\
2)\text{~For~}a\in \mathbb{F}_2, ~a+X=X+a=X;\\
3)\text{~For the additive identity~} 0,~0\cdot X=X\cdot 0=0;\\
4)~1\cdot X=X\cdot 1=X.
\end{cases}
\end{equation}
Now, consider a circuit with response output $c=(c_1,\ldots,c_n)\in \{0,1,X\}^{n}$. Assume we have a test output $b=(b_1,\ldots,b_n)\in \{0,1\}^{n}$, based on the property of $X$-algebra, the $i_{th}$ bit is regarded as an error bit if and only if $b_i+c_i=1$.

For these testing and response outputs vectors, the $X$-compact technique is performed by right multiplying an $n\times m$ binary matrix $H$, where the arithmetics are carried out in $\mathbb{X}_2$. Denote $c'=(c'_1,\ldots,c'_m)=cH$ and $b'=(b'_1,\ldots,b'_m)=bH$ as the $X$-compacted outputs of the response vector $c$ and testing vector $b$ above. Similarly, the $i_{th}$ bit is regarded as an error bit if and only if $b'_i+c'_i=1$. Here, $H$ is called the $X$-compact matrix and the value of  $\frac{n}{m}$ is called the compaction ratio of $H$. To design $X$-compact matrices with large compaction ratio, $X$-codes were introduced in \cite{LM2003}. Roughly speaking, an $X$-code can be viewed as the set of row vectors of an $X$-compact matrix. To give the formal definition of $X$-codes, first, we introduce the following two operations on vectors.

Consider two $m$-dimensional vectors $\mathbf{s}_1=(s_1^{(1)},s_2^{(1)},\ldots,s_m^{(1)})$ and $\mathbf{s}_2=(s_1^{(2)},s_2^{(2)},\ldots,s_m^{(2)})$ where $s_{i}^{(j)}\in\mathbb{F}_2$. The \emph{addition} of $\mathbf{s}_1$ and $\mathbf{s}_2$ is bit-by-bit addition, denoted by $\mathbf{s}_1\oplus\mathbf{s}_2$; that is
\begin{equation*}
\mathbf{s}_1\oplus\mathbf{s}_2=(s_1^{(1)}+s_1^{(2)},s_2^{(1)}+s_2^{(2)},\ldots,s_m^{(1)}+s_m^{(2)}).
\end{equation*}
The \emph{superimposed sum} of $\mathbf{s}_1$ and $\mathbf{s}_2$, denoted by $\mathbf{s}_1 \vee\mathbf{s}_2$, is
\begin{equation*}
\mathbf{s}_1\vee\mathbf{s}_2=(s_1^{(1)}\vee s_1^{(2)},s_2^{(1)}\vee s_2^{(2)},\ldots,s_m^{(1)}\vee s_m^{(2)}),
\end{equation*}
where $s_i^{(j)}\vee s_k^{(l)}=0$ if $s_i^{(j)}=s_k^{(l)}=0$, otherwise $1$. And we say an $m$-dimensional vector $\mathbf{s}_1$ \emph{covers} an $m$-dimensional vector $\mathbf{s}_2$ if $\mathbf{s}_1\vee \mathbf{s}_2=\mathbf{s}_1$. For a finite set $S=\{\mathbf{s}_1,\ldots,\mathbf{s}_s\}$ of $m$-dimensional vectors, define
\begin{equation*}
\bigoplus S=\mathbf{s}_1\oplus\cdots\oplus\mathbf{s}_s,
\end{equation*}
and
\begin{equation*}
\bigvee S=\mathbf{s}_1\vee\cdots\vee\mathbf{s}_s.
\end{equation*}
When $s=1$, $\bigoplus S=\bigvee S=\{\mathbf{s}_1\}$, and when $S=\emptyset$, define $\bigoplus S=\bigvee S=\mathbf{0}$ (i.e. the zero vector).

\begin{definition}\cite{FC2010}\label{xcodedef}
Let $d$ be a positive integer and $x$ a nonnegative integer. An $(m,n,d,x)$ \emph{$X$-code} $\mathcal{X}=\{\mathbf{s}_1,\ldots,\mathbf{s}_n\}$ is a set of $m$-dimensional vectors over $\mathbb{F}_2$ such that $|\mathcal{X}|=n$ and
\begin{equation}\label{xcodedef1}
(\bigvee S_1)\vee(\bigoplus S_2)\neq \bigvee S_1
\end{equation}
for any pair of mutually disjoint subsets $S_1$ and $S_2$ of $\mathcal{X}$ with $|S_1|=x$ and $1\leq|S_2|\leq d$. A vector $\mathbf{s}_i\in \mathcal{X}$ is called a codeword. The weight of the codeword $\mathbf{s}_i$ is $|supp(\mathbf{s}_i)|$. The ratio $\frac{n}{m}$ is called the compaction ratio of $\mathcal{X}$.
\end{definition}

In view of $X$-compaction, the parameter $m$ of an $(m,n,d,x)$ $X$-code represents the size of the shrunk data, $n$ represents the number of bits in the raw response to be compressed at a time, $d$ corresponds to the discrepancy detecting ability and $x$ characterizes the unknowable bits tolerance. Generally speaking, as phrased in \cite{TFAV2018}, an $(m,n,d,x)$ $X$-code hashes the $n$-bit outputs from the circuit's test into $m$ bits while allowing for detecting the existence of up to $d$ bit-wise discrepancies between the actual outputs and correct responses even if up to $x$ bits of the correct behavior are unknowable to the tester.

%Let $M(m,d,x)$ be the maximum number $n$ of codewords for which there exists an $(m,n,d,x)$ $X$-code.
From the definition above, when $x=0$, the codewords of an $(m,n,d,0)$ $X$-code actually form an $m\times n$ parity check matrix of a binary linear code of length $n$ with minimum distance $d+1$. Therefore, $(m,n,d,0)$ $X$-codes can be viewed as a special kind of traditional error-correcting codes.

When $x\geq1$ and $d\geq 2$, according to the definition, an $(m,n,d,x)$ $X$-code is also an $(m,n,d-1,x)$ $X$-code and an $(m,n,d,x-1)$ $X$-code. For the case when $x\geq 1$ and $d=1$, as pointed out in \cite{LM2003}, an $(m,n,1,x)$ $X$-code is equivalent to a \emph{$(1,x)$-superimposed code} of size $m\times n$.

\begin{definition}\cite{KS1964}\label{superimposedcodedef}
A \emph{$(1,x)$-superimposed code} of size $m\times n$ is an $m\times n$ matrix $S$ with entries in $\mathbb{F}_2$ such that no superimposed sum of any $x$ columns of $S$ covers any other column of $S$.
\end{definition}

Superimposed codes are also called \emph{cover-free families} and \emph{disjunct matrices}. These kinds of structures have been extensively studied in information theory, combinatorics and group testing. Thus, the bounds and constructions of $(1,x)$-superimposed codes can also be regarded as those for $(m,n,1,x)$ $X$-codes (see, for example, \cite{BFPR1984,DR1982,EFF1982,EFF1985,Furedi1996,HS1987,SW2000}).

\subsection{Independent sets in hypergraphs}

A hypergraph is a pair $(V,\mathcal{E})$, where $V$ is a finite set and $\mathcal{E}\subseteq 2^{V}$ is a family of subsets of $V$. The elements of $V$ are called vertices and the subsets in $\mathcal{E}$ are called hyperedges. We call $\mathcal{H}$ a $k$-uniform hypergraph, if all the hyperedges have the same size $k$, i.e., $\mathcal{E}\subseteq {V\choose k}$. For any vertex $v\in V$, we define the degree of $v$ to be the number of hyperedges containing $v$, denoted by $d(v)$. The maximum of the degrees of all the vertices is called the maximum degree of $\mathcal{H}$ and denoted by $\Delta(\mathcal{H})$.

An independent set of a hypergraph is a set of vertices containing no hyperedges and the independence number of a hypergraph is the size of its largest independent set. There are many results on the independence number of hypergraphs obtained through different methods (see \cite{AKPSS1982}, \cite{AKS1980}, \cite{DLR1995}, \cite{Kostochka2014}). Recall that a hypergraph $\mathcal{H}$ is linear if every pair of distinct hyperedges from $\mathcal{E}$ intersects in at most one
vertex. In this paper, we shall use the following version of the famous result of Ajtai et al. \cite{AKPSS1982} due to Duke et al. \cite{DLR1995} to derive some lower bounds on $M(m,d,x)$.

\begin{lemma}\cite{DLR1995}\label{DLR94}
Let $k\geq 3$ and let $\mathcal{H}$ be a $k$-uniform hypergraph with $\Delta(\mathcal{H})\leq D$. If $\mathcal{H}$ is linear, then
\begin{equation}\label{hypergraphind}
\alpha(\mathcal{H})\geq c\cdot |V| \cdot(\frac{\log D}{D})^{\frac{1}{k-1}},
\end{equation}
for some constant $c$ that depends only on $k$.
\end{lemma}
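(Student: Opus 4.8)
The plan is to establish this through the semi-random ``nibble'' method of Ajtai, Koml\'os, Pintz, Spencer and Szemer\'edi \cite{AKPSS1982}, in the refined form used by Duke, Lefmann and R\"odl \cite{DLR1995}. Write $N=v(\mathcal{H})$; since $\Delta(\mathcal{H})\le D$, summing degrees shows $\mathcal{H}$ has at most $ND/k$ hyperedges. As a warm-up that already delivers the bound \emph{without} the logarithmic factor, I would sample $V_p\subseteq V$ by keeping each vertex independently with probability $p$ and then delete one vertex from every hyperedge that lies inside $V_p$: the surviving set is independent and has expected size at least $pN-(ND/k)p^{k}$, which is maximised at $p=D^{-1/(k-1)}$ and yields $\alpha(\mathcal{H})\ge (1-1/k)\,N\,D^{-1/(k-1)}$. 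Extracting the additional factor $(\log D)^{1/(k-1)}$ forces one to perform this sparsification not once but in $R\asymp\log D$ successive small rounds, recording a partial independent set as one goes.

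The step where linearity is indispensable is the control of the degree sequence after a nibble. If a vertex $v$ survives $V_p$, its new degree $d'(v)$ in $\mathcal{H}[V_p]$ is the number of hyperedges $e\ni v$ with $e\setminus\{v\}\subseteq V_p$. Because $\mathcal{H}$ is linear, the sets $e\setminus\{v\}$ over the hyperedges $e\ni v$ are pairwise disjoint, so the corresponding survival events are mutually independent; hence $d'(v)$ is a sum of independent Bernoulli variables with mean $d(v)\,p^{k-1}\le D\,p^{k-1}$, and a Chernoff bound makes it sharply concentrated about this value. Thus one round contracts the (typical) maximum degree by roughly the factor $p^{k-1}$ --- after the rare high-degree survivors are pruned --- while losing only a $\bigl(1-p+o(1)\bigr)$-fraction of the vertices and leaving only a negligible fraction of the hyperedges intact to be broken. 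Iterating over $\Theta(\log D)$ rounds brings the hypergraph down to bounded maximum degree, where a constant fraction of the vertices is trivially independent, and telescoping the per-round survival fractions produces the claimed $c\,N(\log D/D)^{1/(k-1)}$.

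The main obstacle I anticipate is the bookkeeping that reconciles three competing requirements over the $\Theta(\log D)$ rounds: (i) each round must keep a $(1-o(1))$-fraction of the current vertices, so that the product of the survival fractions stays bounded below by a positive constant; (ii) the expected number of hyperedges that survive a round, together with the expected number of pruned high-degree survivors, must be negligible compared with the vertices retained; and (iii) the maximum degree must still fall geometrically. These pull the round parameters $p_i$ and the degree-pruning thresholds in opposite directions, and it is exactly the delicate choice that makes the cumulative vertex loss only a constant while the degree collapses all the way to $O(1)$ that accounts for the logarithmic improvement over the elementary bound. An alternative reaching the same estimate would replace the nibble by a direct analysis of the random greedy independent set --- process the vertices in a uniformly random order and add each one that does not complete a hyperedge with those already chosen --- tracking the expected yield via a differential-equation or martingale argument; linearity is again precisely what makes the relevant local random variables concentrate. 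I would carry out whichever of these is cleaner, following the argument of \cite{DLR1995}.
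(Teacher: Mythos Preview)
The paper does not prove this lemma at all: it is quoted verbatim from Duke, Lefmann and R\"odl \cite{DLR1995} (itself building on Ajtai, Koml\'os, Pintz, Spencer and Szemer\'edi \cite{AKPSS1982}) and used as a black box in the proof of Theorem~\ref{lbxcwithhet}. There is therefore nothing in the paper to compare your attempt against.

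That said, your sketch is an accurate outline of the actual argument in \cite{DLR1995}: the warm-up deletion bound, the iterated nibble with geometric degree reduction, the use of linearity to make the survival events for the edges through a fixed vertex independent, and the tension between per-round survival fraction and degree decay are exactly the ingredients of that proof. If your goal is simply to use the lemma as the present paper does, no proof is required; if you intend to include one, what you have is a faithful high-level plan, though you would still need to specify the round parameters and carry out the concentration and telescoping estimates explicitly.
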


\section{Bounds and constructions of constant weighted $X$-codes}

In this section, we consider the bounds and constructions of constant weighted $X$-codes. This section is divided into three subsections. Section III-A includes a general result on the number of codewords of constant weighted $X$-codes from superimposed codes. Then in Section III-B, we give some explicit constructions for constant weighted $X$-codes with $d=3,7$ and $x=2$. And in Section III-C, we improve the general lower bound for $X$-codes of constant weight $3$ with $x=2$.

\subsection{General bounds from superimposed codes}

According to the definition, in \cite{MK2004}, the authors showed that an $(m,n,d,x)$ $X$-code is also an $(m,n,d+1,x-1)$ $X$-code. Note that for two binary vectors, their addition corresponds to the symmetric difference of their underlying sets and their superimposed sum corresponds to the union of their underlying sets. Therefore, by the equivalence between $X$-codes and superimposed codes, we have the following correspondence.
\begin{proposition}\label{prop0}
Let $d$ be a positive integer and $x$ be a nonnegative integer. A \emph{$(1,x+d-1)$-superimposed code} of size $m\times n$ is an $(m,n,d,x)$ $X$-code.
\end{proposition}
Denote $M_{w}(m,d,x)$ as the maximum number of codewords of an $(m,n,d,x)$ $X$-code of constant weight $w$. Since the restrictions for $X$-codes get more rigid with the growing of $d$, combined with the above proposition, we have
\begin{equation}\label{eq3}
M_{w}(m,1,x+d-1)\leq M_{w}(m,d,x)\leq M_{w}(m,1,x).
\end{equation}

In 1985, Erd\H{o}s et al. \cite{EFF1985} proved the following bounds on the maximum number of codewords of a $(1,x)$-superimposed code of constant weight $w$.
\begin{theorem}\cite{EFF1985}\label{upbforcwsc}
Denote $f_x(m,w)$ as the maximum number of columns of a $(1,x)$-superimposed code of constant weight $w$. Let $t=\lceil\frac{w}{x}\rceil$. Then, we have
\begin{equation*}\label{eq41}
\frac{{m\choose t}}{{w\choose t}^2}\leq f_x(m,w)\leq\frac{{m\choose t}}{{{w-1}\choose {t-1}}}.
\end{equation*}
Moreover, if we take $w=x(t-1)+1+\delta$ where $0\leq \delta<x$, then there exists a constant $m_0=m_0(w)$ such that for $m>m_0(w)$,
\begin{equation*}\label{eq2}
f_x(m,w)\geq(1-o(1))\frac{{{m-\delta}\choose t}}{{{w-\delta}\choose t}},
\end{equation*}
and $f_x(m,w)\leq\frac{{{m-\delta}\choose t}}{{{w-\delta}\choose t}}$ holds in the following cases:1) $\delta=0,1$; 2) $\delta<\frac{x}{2t^2}$; 3) $t=2$ and $\delta<\lceil\frac{2x}{3}\rceil$. Moreover, equality of the latter upper bound holds if and only if there exists a Steiner $t$-design $S(t,w-\delta,n-\delta)$.
\end{theorem}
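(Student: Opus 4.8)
The plan is to restate the problem in the language of set systems and then treat the lower and upper bounds separately. A $(1,x)$-superimposed code of constant weight $w$ with $n$ columns over $m$ rows is exactly a family $\mathcal{F}=\{A_{1},\dots,A_{n}\}$ of $w$-subsets of $[m]$ which is \emph{$x$-cover-free}: no $A_{i}$ lies in the union of $x$ of the remaining sets. Write $t=\lceil w/x\rceil$, so that $x(t-1)<w\le xt$. The elementary fact I would build everything on is that a family with pairwise intersections at most $t-1$ (equivalently, a partial Steiner system $S(t,w,m)$) is automatically $x$-cover-free, because for any $A$ and any $B_{1},\dots,B_{x}\in\mathcal{F}$,
\begin{equation*}
|A\cap(B_{1}\cup\dots\cup B_{x})|\le\sum_{k=1}^{x}|A\cap B_{k}|\le x(t-1)<w=|A|.
\end{equation*}
Hence the largest partial Steiner systems give valid codes, and this is the source of all the lower bounds.

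For the crude lower bound I would run greedy deletion: a chosen $w$-set rules out at most $\binom{w}{t}\binom{m-t}{w-t}$ further $w$-sets (pick the common $t$-subset, then extend it), so a maximal family with pairwise intersections $\le t-1$ has at least $\binom{m}{w}/\!\left(\binom{w}{t}\binom{m-t}{w-t}\right)=\binom{m}{t}/\binom{w}{t}^{2}$ members. For the refined one, write $w=x(t-1)+1+\delta$ with $0\le\delta<x$, fix a $\delta$-subset $F$ of $[m]$, take a partial Steiner system $S(t,w-\delta,m-\delta)$ on $[m]\setminus F$ with $(1-o(1))\binom{m-\delta}{t}/\binom{w-\delta}{t}$ blocks (such near-perfect packings exist for $m>m_{0}(w)$ by the R\"odl nibble), and let $\mathcal{F}$ consist of the sets $D\cup F$ for $D$ a block. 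Each such set has size $w$, and for $A=D_{A}\cup F$ and any $B_{1},\dots,B_{x}\in\mathcal{F}$,
\begin{equation*}
|A\cap(B_{1}\cup\dots\cup B_{x})|\le|F|+\sum_{k=1}^{x}|D_{A}\cap D_{B_{k}}|\le\delta+x(t-1)=w-1<w,
\end{equation*}
so $\mathcal{F}$ is $x$-cover-free and attains the claimed lower bound.

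For the upper bounds I would use a \emph{private $t$-subset} argument. For $A\in\mathcal{F}$ let $P(A)$ be the set of $t$-subsets $G\subseteq A$ with $G\not\subseteq B$ for every $B\in\mathcal{F}\setminus\{A\}$. The families $P(A)$ are pairwise disjoint, and none is empty: otherwise every $t$-subset of $A$ lies in some other member, and since $\lceil w/t\rceil\le x$ one can cover $A$ by $x$ of its $t$-subsets, hence by $x$ members of $\mathcal{F}\setminus\{A\}$, contradicting $x$-cover-freeness; already this gives $n\le\binom{m}{t}$. The real point is $|P(A)|\ge\binom{w-1}{t-1}$: the non-private $t$-subsets of $A$ are exactly $\bigcup_{B}\binom{A\cap B}{t}$ over $B\neq A$ with $|A\cap B|\ge t$, and since no $x$ of the traces $A\cap B$ cover the $w$-set $A$, an extremal-set estimate (for $t=2$ this is the Erd\H{o}s--Gallai bound on graphs with no perfect matching) bounds this union by $\binom{w-1}{t}$, whence $|P(A)|\ge\binom{w}{t}-\binom{w-1}{t}=\binom{w-1}{t-1}$ and $n\binom{w-1}{t-1}\le\sum_{A}|P(A)|\le\binom{m}{t}$. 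For the sharp upper bound I would instead count all $t$-subsets of $[m]$ with multiplicities $\mu(G)=|\{A\in\mathcal{F}:G\subseteq A\}|$, so $n\binom{w}{t}=\sum_{G}\mu(G)$; $x$-cover-freeness forces a $t$-subset of high multiplicity to drag many other $t$-subsets down to multiplicity $0$, so that when $\delta$ obeys one of the restrictions (1)--(3) the total excess $\sum_{\mu(G)\ge2}(\mu(G)-1)$ is outweighed by $t$-subsets with $\mu=0$, and, after isolating the at most $\delta$ points common to each bundle of heavily overlapping blocks, one gets $n\le\binom{m-\delta}{t}/\binom{w-\delta}{t}$. Equality forces this count to balance exactly, which rigidifies the configuration into a perfect packing of residual $(w-\delta)$-sets sitting on a common $\delta$-set, i.e.\ an $S(t,w-\delta,m-\delta)$; conversely that design realises the bound via the construction above.

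The main obstacle is the quantitative core of the upper bound: establishing $|P(A)|\ge\binom{w-1}{t-1}$ --- i.e.\ determining how many $t$-subsets of a $w$-set can be covered by proper subsets no $x$ of which cover the whole set --- and then the more delicate, $\delta$-sensitive multiplicity count, including the three case distinctions on $\delta$ and the extraction of the Steiner-system characterisation of equality. The lower bounds and the qualitative statement that $P(A)\neq\emptyset$ are comparatively routine.
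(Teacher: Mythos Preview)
The paper does not prove this theorem: it is quoted as a cited result from Erd\H{o}s--Frankl--F\"uredi \cite{EFF1985} and stated without any argument, serving only as input to the immediate corollary that follows. There is therefore no proof in the present paper against which to compare your proposal.

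For what it is worth, your outline tracks the strategy of the original 1985 paper quite closely. The lower bounds arise exactly as you say, from packings (a greedy count for the crude bound, R\"odl's nibble for the asymptotic one, with a fixed $\delta$-set adjoined to each block). The general upper bound $\binom{m}{t}/\binom{w-1}{t-1}$ is indeed obtained via the ``own $t$-subset'' device you call $P(A)$, and the crux is precisely the inequality $|P(A)|\ge\binom{w-1}{t-1}$ that you flag as the main obstacle; in \cite{EFF1985} this is proved by a short sunflower/Hilton--Milner-type lemma showing that if every $t$-subset of a $w$-set is covered by the trace of some other block, then $x=\lceil w/t\rceil$ of those traces already cover the whole $w$-set. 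Your acknowledgement that the refined, $\delta$-sensitive upper bound and its equality characterisation are the delicate part is accurate: in the source that portion is a genuinely longer case analysis, and your sketch of ``excess multiplicity is outweighed by zero-multiplicity $t$-subsets'' names the right quantity to track but does not yet supply the combinatorial mechanism that forces the trade-off under conditions (1)--(3).
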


According to this bound, by inequality (\ref{eq3}), we have the following immediate consequence:
\begin{theorem}\label{upbforcwxc}
Let $d,x$ be given positive integers, $w=x(t-1)+1+\delta$ where $0\leq \delta<x$, and $m_0=m_0(w)$ be the constant defined in Theorem \ref{upbforcwsc}. Then, for all $m\geq 1$,
\begin{equation}\label{eq51}
\frac{{m\choose {\lceil w/(x+d-1)\rceil}}}{{w\choose {\lceil w/(x+d-1)\rceil}}^2}\leq M_{w}(m,d,x)\leq\frac{{{m}\choose t}}{{{w-1}\choose {t-1}}}.
\end{equation}
And for $m>m_0(w)$,
\begin{equation}\label{eq52}
M_{w}(m,d,x)\geq (1-o(1))\frac{{{m}\choose {\lceil w/(x+d-1)\rceil}}}{{{w}\choose {\lceil w/(x+d-1)\rceil}}}
\end{equation}
and $M_{w}(m,d,x)\leq\frac{{{m-\delta}\choose t}}{{{w-\delta}\choose t}}$ holds in the following cases:1) $\delta=0,1$; 2) $\delta<\frac{x}{2t^2}$; 3) $t=2$ and $\delta<\lceil\frac{2x}{3}\rceil$.
\end{theorem}

In particular, for the case $x=2$, when $m>m_0(w)$, Theorem \ref{upbforcwxc} actually gives the following upper bound
\begin{equation}\label{eq6}
M_{w}(m,d,2)\leq
\begin{cases}
\frac{{{m-1}\choose w/2}}{{{w-1}\choose w/2}}, \text{when}~w~\text{is even};\\[3mm]
\frac{{{m}\choose (w+1)/2}}{{{w}\choose (w+1)/2}}, \text{when}~w~\text{is odd}.
\end{cases}
\end{equation}
According to the results from design theory, Fujiwara and Colbourn \cite{FC2010} proved the upper bound above is tight for the case $w=3$ and $d=1$, when there exists a corresponding Steiner triple system. Using the well-known \emph{graph removal lemma}, Tsunoda and Fujiwara \cite{TF2018} improved this upper bound on $M_3(m,d,2)$ to $o(m^2)$ for $d\geq 4$. So far as we know, for $d\geq 2$ and $x=2$, no upper or lower bounds better than these can be found in the literature.

\subsection{Explicit constructions of constant weighted $X$-codes}

\subsubsection{Constructions of constant weighted $X$-codes with $d=3$ and $x=2$}

In this part, we present two explicit constructions of constant weighted $X$-codes with $d=3$ and $x=2$, which provide asymptotically optimal lower bounds on $M_{4}(m,3,2)$ and nearly optimal lower bounds on $M_{3}(m,3,2)$, respectively.

\begin{itemize}
  \item Construction I : Let $p>w$ be a prime and $q$ be a power of $p$, take $w$ copies $X_1,X_2,\ldots,X_{w}$ of $\mathbb{F}_q$. Define
      \begin{align*}
      \mathcal{P}_1=\{&(x_1,x_2,\ldots,x_{w})\in \prod_{i=1}^{w}X_{i}:\\
      &x_1+(j-1)\cdot x_2+x_{j+1}=0~\text{for}~2\leq j\leq w-1\},
      \end{align*}
      as a family of $w$-tuples in $X_1\times \cdots\times X_{w}$. Clearly, $|\mathcal{P}_1|=q^2$. For $P_1\neq P_2\in \mathcal{P}_1$, we denote $P_1\cap P_2=\{i:P_1(i)=P_2(i),~1\leq i\leq w\}$ and define the indicator vector of ${P_{i}}$ as the concatenation of the $w$ indicator vectors of element $x_i$, i.e., $\mathbf{v}_{P_{i}}=(\mathbf{v}_{x_1},\mathbf{v}_{x_2},\ldots,\mathbf{v}_{x_{w}})$, where $\mathbf{v}_{x_i}$ is the indicator vector of element $x_i$ of length $q$. Let $\mathcal{C}_1$ be the set of all indicator vectors corresponding to $w$-tuples in $\mathcal{P}_1$.
\end{itemize}

\begin{theorem}\label{lwbforcwxc1}
For any $w\geq 4$ and prime $p>w$, let $q$ be a power of $p$, the code $\mathcal{C}_1$ from Construction I is a $(wq,q^{2},3,2)$ $X$-code of constant weight $w$.
\end{theorem}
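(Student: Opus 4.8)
The plan is to take the explicit family $\mathcal{C}'$ constructed just above the statement and verify that it satisfies Definition~\ref{xcodedef} with $d=3$ and $x=2$. Recall the set-up: the ground set is a disjoint union $X_0\sqcup\cdots\sqcup X_{w-1}$ of $w$ copies of $\mathbb{F}_q$ (so there are $wq$ coordinates), a codeword is the indicator vector of a transversal $w$-set $\{x_0,\dots,x_{w-1}\}$ with $x_i\in X_i$ and $x_{j+1}=-x_0-jx_1$ for $1\le j\le w-2$, and the two free entries $(x_0,x_1)\in\mathbb{F}_q^2$ parametrise exactly $q^2$ codewords, each of weight $w$. The hypothesis $p>w$ enters only through the structural fact that \emph{any two distinct codewords, regarded as $w$-subsets, share at most one coordinate}. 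I would prove this first: the $w$ linear functionals $(x_0,x_1)\mapsto x_i$ on $\mathbb{F}_q^2$ (namely $x_0$, $x_1$, and $-x_0-(i-1)x_1$ for $2\le i\le w-1$) are pairwise linearly independent, since every integer coefficient occurring, and every difference of two of them, is a nonzero integer of absolute value $<w<p$; hence two codewords agreeing in two coordinates must coincide. Everything below rests on this.

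Now fix disjoint subsets $S_1=\{Q_1,Q_2\}$ and $S_2=\{P_1,\dots,P_t\}$ of $\mathcal{C}'$ with $1\le t\le 3$; I must show ${\rm supp}(\bigoplus S_2)\not\subseteq{\rm supp}(Q_1)\cup{\rm supp}(Q_2)$. Each $P_i$ meets $Q_1$ in at most one coordinate and $Q_2$ in at most one coordinate, so ${\rm supp}(Q_1)\cup{\rm supp}(Q_2)$ contains at most $2t$ of the coordinates of $\bigoplus S_2$; a covering would therefore force $W(\bigoplus S_2)\le 2t$. For $t=1$ this reads $w\le 2$, impossible; for $t=2$ it reads $2w-2|P_1\cap P_2|\le 4$, hence $w\le 3$, impossible because $w\ge 4$; for $t=3$, inclusion--exclusion together with all pairwise intersections being at most $1$ gives $W(\bigoplus S_2)\ge 3w-6$, which exceeds $6$ as soon as $w\ge 5$. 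Thus the only configuration not killed by this crude count is $w=4$, $t=3$, $W(\bigoplus S_2)=6$, and that equality forces $|P_1\cap P_2|=|P_1\cap P_3|=|P_2\cap P_3|=1$ together with $P_1\cap P_2\cap P_3=\emptyset$.

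This remaining case is the crux: it is exactly where the weight bound fails and the algebraic form of $\mathcal{C}'$ must be exploited, just as in the proof of Theorem~\ref{lwbforcwxc}. Since $P_1\cap P_2\cap P_3=\emptyset$, the three pairwise-intersection points lie in three distinct coordinates, so with $w=4$ there is exactly one coordinate $c$ left over. I would then check that $x_c(P_1),x_c(P_2),x_c(P_3)$ are pairwise distinct (a coincidence would give some pair two shared coordinates, contradicting the structural fact) and that each of them lies in ${\rm supp}(\bigoplus S_2)$ (none can lie in either of the other two $P_j$'s, for the same reason). A covering by $\mathbf{v}_{Q_1}\vee\mathbf{v}_{Q_2}$ would then require these three distinct elements of $X_c$ to lie in $\{x_c(Q_1),x_c(Q_2)\}$, which is impossible by pigeonhole. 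Having excluded every case, $\mathcal{C}'$ is an $(wq,q^2,3,2)$ $X$-code of constant weight $w$, which proves the statement.
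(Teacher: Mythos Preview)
Your proof is correct and follows essentially the same approach as the paper's own argument preceding the theorem: after observing that any two distinct codewords share at most one coordinate, you handle $t\le 2$ and $w\ge 5$ by the crude weight bound, and for the residual case $w=4$, $t=3$, $W(\bigoplus S_2)=6$ you use the pigeonhole argument at the single ``leftover'' coordinate $c$, which is exactly what the paper does (the paper phrases the contradiction as two of $x_c(P_i)$ being forced equal, yielding a second shared coordinate, whereas you first note they are distinct and then invoke pigeonhole---these are the same argument). Your explicit verification of the pairwise linear independence of the coordinate functionals, using $p>w$, is a welcome addition that the paper leaves as ``one can easily check''.
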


\begin{proof}[Proof of Theorem \ref{lwbforcwxc1}]
From the definition, one can easily check that $|P_1\cap P_2|\leq 1$ for any two distinct $P_1,P_2\in\mathcal{P}_1$. Therefore, for integer $t\geq 1$, $\mathbf{v}_{P_1}\vee \mathbf{v}_{P_2}$ of any two distinct $P_1,P_2\in \mathcal{P}_1$ can cover at most $2t$ distinct ``1''s in $\mathbf{v}_{P_{3}}\oplus\cdots \oplus \mathbf{v}_{P_{t+2}}$ for other $t$ distinct $P_i$s in $\mathcal{P}_1$. Since $w\geq 4$ and $W(\bigoplus_{i=3}^{t+2}\mathbf{v}_{P_{i}})\geq t(w-t+1)$, this guarantees that the addition of any two or fewer vectors in $\mathcal{C}_1$ can not be covered by the superimposed sum of any other two vectors.

When $t=3$, assume there are $\{P_i\}_{i=1}^{5}$ such that $\mathbf{v}_{P_{3}}\oplus \mathbf{v}_{P_{4}} \oplus \mathbf{v}_{P_{5}}$ can be covered by $\mathbf{v}_{P_1}\vee \mathbf{v}_{P_2}$. Since $W(\mathbf{v}_{P_{3}}\oplus \mathbf{v}_{P_{4}} \oplus \mathbf{v}_{P_{5}})\geq 3(w-2)$, thus, we have $w=4$ and $W(\mathbf{v}_{P_3}\oplus \mathbf{v}_{P_4}\oplus \mathbf{v}_{P_5})=6$. Note that for $i\neq j$, $|P_i\cap P_j|\leq 1$. Thus, we have $|P_i\cap P_j|=1$ for $i\in \{1,2\}, j\in\{3,4,5\}$ and $|P_{j_1}\cap P_{j_2}|=1$ for $j_1,j_2\in \{3,4,5\}$.  Assume that $P_{j_1}\cap P_{j_2}=\theta_{j_1,j_2}$, $j_1,j_2\in \{3,4,5\}$. Since $w=4$, w.l.o.g., assume that $\theta_{3,4}=1$, $\theta_{3,5}=2$ and $\theta_{4,5}=3$. Therefore, we have $P_3(4),P_4(4),P_5(4)\in \{P_1(4)\}\cup\{P_2(4)\}$. By pigeonhole principle, w.l.o.g., we can assume that $P_3(4)=P_4(4)=P_1(4)$, this indicates that $|P_3\cap P_4|\geq 2$, a contradiction. Therefore, the addition of any three vectors in $\mathcal{C}_1$ can not be covered by the superimposed sum of any other two vectors. This indicates that $\mathcal{C}_1$ is a $(wq,|\mathcal{P}|,3,2)$ $X$-code of constant weight $w$.
\end{proof}

Actually, with the same spirit, there can be many other similar constructions providing the same bound. However, when $w=3$, this kind of constructions is no longer enough to guarantee the restrictions of being an $X$-code. For this case, we provide a new construction. First, we need the following lemma from \cite{EFR1986}.
\begin{lemma}\cite{EFR1986}\label{lem1}
For positive integers $w$ and $m$, there exists a set of positive integers $A\subseteq [m]$ of size
\begin{equation*}
|A|\geq \frac{m}{e^{c\log{w}\sqrt{\log{m}}}}
\end{equation*}
for some absolute constant $c$, such that $A$ contains no three terms of any arithmetic progressions of length $w$.
\end{lemma}

The specific construction of the set $A$ from Lemma \ref{lem1} can be regarded as an extension of the $3$-AP-free subset of $[m]$ given by Behrend \cite{B1946} and the detailed construction can be found in Section 5 of \cite{EFR1986}.

\begin{itemize}
  \item Construction II : Let $m_1=\lfloor\frac{m}{w}\rfloor$, $m_2=\lfloor\frac{m}{w^2}\rfloor$ and $A\subseteq [m_2]$ be the subset constructed from Lemma \ref{lem1} such that $A$ contains no three terms of any progressions of length $w$.
      Take $w$ copies $X_1,X_2,\ldots,X_{w}$ of $[m_1]$. Define
      \begin{align*}
      \mathcal{P}_2=\{&(x,x+a,\ldots,x+(w-1)a): \\
      &a\in A \text{~and~} x+(i-1)a\in X_i \text{~for~}1\leq i\leq w\},
      \end{align*}
      as a family of $w$-tuples in $X_1\times\cdots\times X_{w}$. Similarly, given $P_1\neq P_2\in \mathcal{P}_2$, denote $P_1\cap P_2=\{i:P_1(i)=P_2(i),~1\leq i\leq w\}$ and we define the indicator vector of ${P_{i}}$ as the concatenation of the $w$ indicator vectors of element $x_i$ together with an assistant zero vector, i.e., $\mathbf{v}_{P_{i}}=(\mathbf{v}_{x_1},\mathbf{v}_{x_2},\ldots,\mathbf{v}_{x_{w}},\mathbf{0})$, where $\mathbf{v}_{x_i}$ is the indicator vector of element $x_i$ of length $m_1$ and $\mathbf{0}$ is a zero vector of length $m-wm_1$. Let $\mathcal{C}_2$ be the set of all indicator vectors corresponding to $w$-tuples in $\mathcal{P}_2$.
\end{itemize}

\begin{theorem}\label{lwbforcwxc}
For any $\varepsilon>0$ and $w\geq 3$, there exists a constant $M=M(w,\varepsilon)>0$, such that for $m\geq M$, the code $\mathcal{C}_2$ from Construction II is an $(m,m^{2-\varepsilon},3,2)$ $X$-code of constant weight $w$.
\end{theorem}

\begin{proof}[Proof of Theorem \ref{lwbforcwxc}]
By the definition of $\mathcal{P}_2$, for $P_1\neq P_2\in \mathcal{P}_2$, we know that $|P_1\cap P_2|\leq 1$.
%Otherwise, assume $\{y_1,y_2\}\subseteq P_1\cap P_2$, then we have
%\begin{equation*}
%\begin{cases}
%y_1=x_1+i_1a\in P_1\cap X_{i_1}; \\
%y_1=x_2+i_1b\in P_2\cap X_{i_2}; \\
%y_2=x_1+i_2a\in P_1\cap X_{i_1}; \\
%y_2=x_2+i_2b\in P_2\cap X_{i_2},
%\end{cases}
%\end{equation*}
%for some $a,b\in A$. This leads to $x_1=x_2$ and $a=b$. Therefore, $P_1=P_2$, a contradiction.
To proceed the proof, we need the following claim about the structure of $\mathcal{P}_2$.

\textbf{Claim.} $\mathcal{P}_2$ does not contain the following triple: $\{Q_1,Q_2,Q_3\}\subseteq \mathcal{P}_2$ satisfying that $Q_1\cap Q_2=\{\eta_1\}$, $Q_1\cap Q_3=\{\eta_2\}$, $Q_2\cap Q_3=\{\eta_3\}$, where $\eta_1,\eta_2,\eta_3\in\{1,2,\ldots,w\}$ are pairwise distinct.

\begin{proof}
Otherwise, assume that there are $\{Q_1,Q_2,Q_3\}\subseteq \mathcal{P}_2$ such that $Q_1\cap Q_2=\{\eta_1\}$, $Q_1\cap Q_3=\{\eta_2\}$, $Q_2\cap Q_3=\{\eta_3\}$ for three distinct ${\eta_1},{\eta_2},{\eta_3}$. By the definition of $\mathcal{P}_2$, for $1\leq i\leq 3$, we can assume that $Q_i=(x_i,x_i+a_i,\ldots,x_i+(w-1)a_i)$. Thus, we have
\begin{equation}\label{eq30}
\begin{cases}
Q_1(\eta_1)=x_1+(\eta_1-1)a_1=Q_2(\eta_1)=x_2+(\eta_1-1)a_2;\\
Q_1(\eta_2)=x_1+(\eta_2-1)a_1=Q_3(\eta_2)=x_3+(\eta_2-1)a_3;\\
Q_2(\eta_3)=x_2+(\eta_3-1)a_2=Q_3(\eta_3)=x_3+(\eta_3-1)a_3.
\end{cases}
\end{equation}
Combining these three equations in (\ref{eq30}) together, we have
\begin{equation*}
(\eta_2-\eta_1)a_1=(\eta_2-\eta_3)a_3+(\eta_3-\eta_1)a_2.
\end{equation*}
This means that $(\eta_2-\eta_3)(a_3-a_1)=(\eta_1-\eta_3)(a_2-a_1)$. Since $\eta_i$s are pairwise distinct, thus, both $\eta_2-\eta_3$ and $\eta_1-\eta_3$ are non-zero integers. Moreover, the distinctness of $Q_i$ also leads to $a_1,a_2,a_3$ being pairwise distinct. Thus, we have $a_3-a_1=\frac{\eta_1-\eta_3}{\eta_2-\eta_3}(a_2-a_1)$. W.l.o.g., assume that $\gcd(\eta_1-\eta_3,\eta_2-\eta_3)=1$. Then, take $D=\frac{a_2-a_1}{\eta_2-\eta_3}$, we have
\begin{equation*}
a_2=a_1+(\eta_2-\eta_3)D \text{ and } a_3=a_1+(\eta_1-\eta_3)D.
\end{equation*}
Since $\eta_1,\eta_2,\eta_3\in \{0,1,\ldots,w-1\}$, thus, $|\eta_i-\eta_j|<w$ for any $i\neq j\in [3]$. Therefore, $\{a_1,a_2,a_3\}\subseteq A$ are three pairwise distinct terms of a $w$-AP with common difference $D$. This contradicts the construction of $A$.
\end{proof}

%For each $x\in [m_1]$, denote $\mathbf{v}_x$ as the indicator vector of $x$ in $\mathbb{F}_2^{m_1}$. Now, given $P=\{x,x+a,\ldots,x+(w-1)a\}\in \mathcal{P}$, define
%\begin{equation}\label{def_iv}
%\mathbf{v}_P=(\mathbf{v}_{x},\mathbf{v}_{x+a},\ldots,\mathbf{v}_{x+(w-1)a},\mathbf{0})\in \mathbb{F}_2^{m},
%\end{equation}
%where $\mathbf{0}$ is a zero vector of length $m-wm_1$. Denote $\mathcal{C}=\{\mathbf{v}_P:P\in \mathcal{P}\}$, we have the following claim.

With the help of this claim, next, for any two distinct $P_1,P_2\in \mathcal{P}_2$, we will verify that $\mathbf{v}_{P_1}\vee \mathbf{v}_{P_2}$ can not cover the addition of any at most three other vectors in $\mathcal{C}_2$.

First, since $|P_1\cap P_2|\leq 1$, $\mathbf{v}_{P_1}\vee \mathbf{v}_{P_2}$ can cover at most $2t$ distinct ``1''s in $\mathbf{v}_{P_{3}}\oplus\cdots \oplus \mathbf{v}_{P_{t+2}}$ for other $t$ distinct $P_is\in \mathcal{P}_1$. Note that $w\geq 3$, thus $\mathbf{v}_{P_1}\vee \mathbf{v}_{P_2}$ can not cover any other one vector in $\mathcal{C}_2$.

Second, assume that there exist other two distinct $P_3,P_4\in \mathcal{P}_2$ such that $\mathbf{v}_{P_3}\oplus \mathbf{v}_{P_4}$ is covered by $\mathbf{v}_{P_1}\vee \mathbf{v}_{P_2}$. When $w\geq 4$, we have $W(\mathbf{v}_{P_3}\oplus \mathbf{v}_{P_4})\geq 2(w-1)>4$. This indicates that one of the four intersections $|P_1\cap P_3|$, $|P_1\cap P_4|$, $|P_2\cap P_3|$, $|P_2\cap P_4|$ must be strictly larger than one, which is impossible. When $w=3$ and $W(\mathbf{v}_{P_3}\oplus \mathbf{v}_{P_4})=4$, this leads to $|P_1\cap P_3|=|P_1\cap P_4|=|P_2\cap P_3|=|P_2\cap P_4|=|P_3\cap P_4|=1$ and the intersection of any three of $P_1,P_2,P_3,P_4$ is an empty set. Thus, we can assume that $P_3\cap P_4=\{\theta_0\}$, $P_1\cap P_3=\{\theta_1\}$, $P_1\cap P_4=\{\theta_2\}$, where $\theta_0,\theta_1,\theta_2\in \{1,\ldots,w\}$ are pairwise distinct. This contradicts the claim above. Thus, $\mathbf{v}_{P_1}\vee \mathbf{v}_{P_2}$ can not cover the addition of any other two vectors in $\mathcal{C}_2$.

%From the definition of $P_i$, we have
%\begin{equation}\label{eq31}
%\begin{cases}
%\theta_0=x_1+i_0a_1=x_2+i_0a_2;\\
%\theta_1=x_1+i_1a_1=x_3+i_1a_3;\\
%\theta_2=x_1+i_2a_1=x_4+i_2a_4;\\
%\theta_3=x_2+i_3a_2=x_3+i_3a_3;\\
%\theta_4=x_2+i_4a_2=x_4+i_4a_4,
%\end{cases}
%\end{equation}
%where element $x_i\in X_0$ is the leading term in $P_i$, and element $a_i\in A$ is the common difference corresponding to $P_i$. Combining the first two identities with the $4_{th}$ one above, we have
%\begin{equation*}
%(i_1-i_0)a_1+(i_0-i_3)a_2=(i_1-i_3)a_3.
%\end{equation*}
%Therefore, $(i_1-i_0)(a_1-a_3)=(i_3-i_0)(a_2-a_3)$. Since $i_0,i_1,i_3$ are pairwise distinct, we have $a_1-a_3=\frac{i_3-i_0}{i_1-i_0}(a_2-a_3)$. This means that $a_1,a_2,a_3$ are three distinct terms in a $w$-AP, which contradicts the choice of set $A$. Actually, the analysis above shows that the construction of $\mathcal{P}$ forbids the following structure in $\mathcal{P}$: $\{Q_1,Q_2,Q_3\}\subseteq\mathcal{P}$ satisfying that $Q_1\cap Q_2=\{\eta_1\}$, $Q_1\cap Q_3=\{\eta_2\}$, $Q_2\cap Q_3=\{\eta_3\}$, where $\eta_j\in X_{i_j}$ for three distinct $i_1,i_2,i_3\in\{0,1,\ldots,w-1\}$.

Now, assume that there exist other three distinct $\{P_3,P_4,P_5\}\subseteq \mathcal{P}_2$ such that $\mathbf{v}_{P_3}\oplus \mathbf{v}_{P_4}\oplus \mathbf{v}_{P_5}$ is covered by $\mathbf{v}_{P_1}\vee \mathbf{v}_{P_2}$. Since $\mathbf{v}_{P_1}\vee \mathbf{v}_{P_2}$ can cover at most $6$ distinct ``1''s in $\mathbf{v}_{P_3}\oplus \mathbf{v}_{P_4}\oplus \mathbf{v}_{P_5}$, thus, by $W(\mathbf{v}_{P_1}\oplus \mathbf{v}_{P_2}\oplus \mathbf{v}_{P_3})\geq 3(w-2)$, we can assume that $w\leq 4$.

When $w=3$, since $W(\mathbf{v}_{P_1}\vee \mathbf{v}_{P_2})\leq 6$ and $|P_i\cap P_j|\leq 1$ ($i\neq j\in \{3,4,5\}$), thus, either $W(\mathbf{v}_{P_3}\oplus \mathbf{v}_{P_4}\oplus \mathbf{v}_{P_5})=3$ or $W(\mathbf{v}_{P_3}\oplus \mathbf{v}_{P_4}\oplus \mathbf{v}_{P_5})=5$. For the case $W(\mathbf{v}_{P_3}\oplus \mathbf{v}_{P_4}\oplus \mathbf{v}_{P_5})=3$, we can assume that $P_3\cap P_4=\{\theta_0\}$, $P_3\cap P_5=\{\theta_1\}$, $P_4\cap P_5=\{\theta_2\}$, where $\theta_0,\theta_1,\theta_2\in \{1,\ldots,w\}$ are pairwise distinct. For the case $W(\mathbf{v}_{P_1}\oplus \mathbf{v}_{P_2}\oplus \mathbf{v}_{P_3})=5$, we can assume that $P_3\cap P_4=\{\theta_0\}$, $P_3\cap P_5=\{\theta_1\}$, $P_1\cap P_3=\{\theta_2\}$, $P_1\cap P_4=\{\theta_3\}$, where $\{\theta_{i}\}_{i=0}^{3}\subseteq \{1,\ldots,w\}$ are pairwise distinct. For both cases, we have three distinct $P_i$s pairwise intersecting at three distinct elements $\theta_j$s, which contradicts to the former claim.

When $w=4$, since $W(\mathbf{v}_{P_1}\vee \mathbf{v}_{P_2})\leq 8$ and $|P_i\cap P_j|\leq 1$ ($i\neq j\in \{3,4,5\}$), thus, either $W(\mathbf{v}_{P_3}\oplus \mathbf{v}_{P_4}\oplus \mathbf{v}_{P_5})=6$ or $W(\mathbf{v}_{P_3}\oplus \mathbf{v}_{P_4}\oplus \mathbf{v}_{P_5})=8$. For the case $W(\mathbf{v}_{P_3}\oplus \mathbf{v}_{P_4}\oplus \mathbf{v}_{P_5})=8$, we have $|P_i\cap P_j|>1$ for some $i\in\{1,2\}$ and $j\in\{3,4,5\}$, a contradiction. For the case $W(\mathbf{v}_{P_3}\oplus \mathbf{v}_{P_4}\oplus \mathbf{v}_{P_5})=6$, we can assume that $P_3\cap P_4=\{\theta_{34}\}$, $P_3\cap P_5=\{\theta_{35}\}$, $P_4\cap P_5=\{\theta_{45}\}$ and $P_i\cap P_j=\{\theta_{ij}\}$ for each $i\in\{1,2\},j\in\{3,4,5\}$, where $\theta_{ij}\in \{1,2,\ldots,w\}$ are pairwise distinct. This also leads to three distinct $P_i$s pairwise intersecting at three distinct elements $\theta_{ij}$s, which contradicts the construction of $\mathcal{P}_2$.
%From the definition of $P_i$, we have
%\begin{equation*}
%\begin{cases}
%\theta_{12}=x_1+s_{12}a_1=x_2+s_{12}a_2,~\text{for some}~0\leq s_{12}\leq w-1;\\
%\theta_{14}=x_1+s_{14}a_1=x_4+s_{14}a_4,~\text{for some}~0\leq s_{14}\leq w-1;\\
%\theta_{24}=x_2+s_{24}a_2=x_4+s_{24}a_4,~\text{for some}~0\leq s_{24}\leq w-1.
%\end{cases}
%\end{equation*}
%And this also leads to $A$ contains three distinct terms in a $w$-AP, a contradiction.

In conclusion, the addition of any three or fewer vectors in $\mathcal{C}_2$ can not be covered by the superimposed sum of any other two vectors. Since $|A|\geq \frac{m_2}{e^{c\log{w}\sqrt{\log{m_2}}}}$ for some $c>0$, we have $|\mathcal{P}_2|\geq m_2|A|\geq m^{2-\varepsilon}$ for every $\varepsilon>0$ and $m\geq M$, therefore, $\mathcal{C}_2$ is the desired $(m,m^{2-\varepsilon},3,2)$ $X$-code of constant weight $w$.
\end{proof}

%For the completeness of the paper, we include it as following: Let $\beta\geq 2$ and $w\geq 1$, for any $1\leq a\leq m$, $a$ can be written as
%\begin{equation*}
%a=a_0+a_1(2\beta w)+a_2(2\beta w)^2+\cdots+a_k(2\beta w)^k.
%\end{equation*}
%Set $N(\mathbf{\alpha})={(\sum_{i=0}^{k}a_i^2)}^{1/2}$, where $\mathbf{\alpha}=(a_0,a_1,\ldots,a_k)$. For $s\geq 1$, set
%\begin{equation*}
%A_{m,\beta,s}=\{a:~1\leq a\leq n,~0\leq a_i\leq d~\text{for all}~i,~(N(\mathbf{\alpha}))^2=s\}.
%\end{equation*}
%For given $w$ and $\beta$, $k=\Theta(\frac{\log{m}}{\log(2w\beta)})$. Thus, there are at most $\beta^2k$ possible values of $s$. Meanwhile, since $|\bigcup_{s\geq 1}A_{m,\beta,s}|\geq(1-o(1))\frac{m}{(2w)^{k}}$, by taking $\beta=e^{\sqrt{\log(n)}}$, there exists some $s_0\geq 1$ such that $|A_{m,\beta,s_0}|\geq \frac{m}{e^{c\log{w}\sqrt{\log{m}}}}$ and let $A=A_{m,\beta,s_0}$, we have the desired $3$-AP-free set.

\begin{remark}
According to the upper bound given by (\ref{eq6}), we have
\begin{equation*}
\begin{cases}
M_{3}(m,3,2)\leq\frac{m(m-1)}{6}, \\%[3mm]
M_{4}(m,3,2)\leq\frac{(m-1)(m-2)}{6}.
\end{cases}
\end{equation*}
Therefore, for the case $w=3$, the lower bound $m^{2-\varepsilon}$ from Theorem \ref{lwbforcwxc} is nearly optimal; and for the case $w=4$, the lower bound $c'm^{2}$ from Theorem \ref{lwbforcwxc1} is optimal, regardless of a constant factor. For cases when $w\geq 9$, (\ref{eq52}) in Theorem \ref{upbforcwxc} provides better lower bounds $(1-o(1))\frac{{m\choose {\lceil w/4\rceil}}}{{w\choose {\lceil w/4\rceil}}}$, but the gaps between the upper bounds and the lower bounds are still quite large.

It is also worth noting that, the construction from Theorem \ref{lwbforcwxc} was originally proposed by Erd\H{o}s et al. \cite{EFR1986} to construct $w$-uniform hypergraphs
on $m$ vertices such that no $3w-3$ vertices span $3$ or more hyperedges. This kind of hypergraphs is a special kind of sparse hypergraphs which will be discussed later in Section III.D.
\end{remark}

\subsubsection{Construction of constant weighted $X$-codes with $d=7$ and $x=2$}

Before we present the construction, we shall prove a proposition which establishes a connection between constant weighted $X$-codes with $d=7, x=2$ and uniform hypergraphs of girth five.

Given a $k$-uniform hypergraph $\mathcal{H}=(V,\mathcal{E})$ and a positive integer $l\geq 2$, a cycle of length $l$ in $\mathcal{H}$ ($l$-cycle in short), denoted by $\mathbb{C}_l$, is an alternating sequence of distinct vertices and hyperedges of the form: $v_1,E_1,v_2,E_2,\ldots,v_l,E_l,v_1$, such that $\{v_i,v_{i+1}\}\subseteq E_{i}$ for each $i\in\{1,2,\ldots,l\}$ and $\{v_l,v_1\}\subseteq E_l$. A linear path of length $l$ ($l$-path in short), denoted by $\mathbb{P}_l$, is an alternating sequence of distinct vertices and hyperedges of the form: $E_1,v_2,E_2,v_3,\ldots,v_l,E_l$, such that $E_i\cap E_{i+1}=\{v_{i+1}\}$ for each $i$ and $E_i\cap E_j=\emptyset$ whenever $|j-i|>1$. And the \emph{girth} of hypergraph $\mathcal{H}$ is the minimum length of a cycle in $\mathcal{H}$.

\begin{proposition}\label{xcandhgofg}
Let $w\geq3$ be a positive integer. For any $w$-uniform hypergraph $\mathcal{H}=(V,\mathcal{E})$ of girth at least $5$, the set of all the indicator vectors of hyperedges in $\mathcal{E}$ forms a $(|V|,|\mathcal{E}|,7,2)$ $X$-code of constant weight $w$.
\end{proposition}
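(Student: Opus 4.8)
\emph{Reformulation and the key counting step.}
By Definition~\ref{xcodedef} with $x=2$ and $d=7$, the indicator vectors of $\mathcal E$ fail to form a $(|V|,|\mathcal E|,7,2)$ $X$-code precisely when there are $l+2$ \emph{distinct} hyperedges $E_1,E_2,F_1,\dots,F_l$ with $1\le l\le 7$ such that $\mathbf v_{E_1}\vee\mathbf v_{E_2}$ covers $\mathbf v_{F_1}\oplus\cdots\oplus\mathbf v_{F_l}$. Writing $W=\bigcup_i F_i$ and $D=\{v\in W:\ v\text{ lies in an odd number of the }F_i\}$, this says exactly $D\subseteq E_1\cup E_2=:U$, so it suffices to rule out such a configuration in a $w$-uniform hypergraph of girth $\ge 5$. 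First I would attach to $F_1,\dots,F_l$ the \emph{bundle hypergraph} $\mathcal B$ on vertex set $\{1,\dots,l\}$ whose hyperedges are the index sets $B_v=\{i:\ v\in F_i\}$ over all vertices $v$ contained in at least two of the $F_i$. Since girth $\ge 5$ forces $\mathcal H$ to be linear, $\mathcal B$ is linear, and a Berge $k$-cycle in $\mathcal B$ gives rise to a Berge $k$-cycle in $\mathcal H$; hence $\mathcal B$ is a linear hypergraph of Berge-girth $\ge 5$ on at most $7$ vertices. A short bookkeeping then yields
\begin{equation*}
|D| \;=\; lw \;-\; 2\sum_{B\in\mathcal B}\Big\lfloor\tfrac{|B|}{2}\Big\rfloor ,
\end{equation*}
and since $|U|=|E_1\cup E_2|\le 2w$ we obtain $\sum_{B}\lfloor|B|/2\rfloor\ge (l-2)w/2$.

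\emph{Disposing of all but $w=3$.}
Now $\sum_B\lfloor|B|/2\rfloor\le\sum_B(|B|-1)$, and for a linear Berge-girth-$\ge5$ hypergraph on $l$ vertices this is bounded by the extremal number of edges of a $\{C_3,C_4\}$-free graph on $l$ vertices (at most $l-1$ for $l\le4$, and the small absolute constants $5,6,8$ for $l=5,6,7$). Comparing with $\sum_B\lfloor|B|/2\rfloor\ge(l-2)w/2$ already gives a contradiction whenever $l\ge4$ and $w\ge4$. For $l=3$ and $w\ge4$ the inequality forces $\mathcal B$ to be a path on three vertices, whence some $F_i$ has $w-1\ge3$ ``private'' vertices that must all lie in $U$ — impossible, since $|F_i\cap E_1|\le1$ and $|F_i\cap E_2|\le1$. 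The remaining small cases $l\le2$ are immediate ($l=1$: a single $w$-set cannot be covered by $E_1\cup E_2$; $l=2$, $w\ge4$: $|D|\ge2w-2$ exceeds the at most four vertices of $F_1\cup F_2$ lying in $U$). Thus everything reduces to $w=3$.

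\emph{The case $w=3$.}
Here each $F_i$ is a triple and $|U|\le 6$, and for every $l\in\{2,\dots,7\}$ the inequality $\sum_B\lfloor|B|/2\rfloor\ge(l-2)\cdot\tfrac32$ is so close to the edge bound that $\mathcal B$ is pinned down to a short explicit list of $\{C_3,C_4\}$-free graphs: a tree for $l\le4$, $C_5$ for $l=5$, either $C_6$ or $C_5$ with one pendant edge for $l=6$, and the extremal $7$-vertex graph for $l=7$. In each case the structure of the $F_i$ is determined, and $D$ is a set of private vertices $c_i$ (each lying in exactly one $F_i$) that must all sit inside $E_1\cup E_2$. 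Because $|E_1|=|E_2|=3$ and $|F_i\cap E_j|\le1$, a counting/colouring argument forces some $E_j$ to contain two private vertices $c_i,c_{i'}$ whose hyperedges $F_i,F_{i'}$ lie at distance at most $2$ in $\mathcal B$; splicing $E_j$ together with the length-$1$ or length-$2$ $F$-path joining $F_i$ to $F_{i'}$ produces a Berge cycle of length $3$ or $4$ in $\mathcal H$, contradicting girth $\ge5$. For $l=2,3$ this last step is a short direct check that again exhibits a Berge triangle.

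\emph{Main obstacle.}
The technical heart is the $w=3$ regime: enumerating the near-extremal bundle hypergraphs for each $l$ and verifying in every case that the covering constraint $D\subseteq E_1\cup E_2$ creates a short cycle. Everything else is clean counting once the identity $|D|=lw-2\sum_B\lfloor|B|/2\rfloor$ and the elementary edge bound for $\{C_3,C_4\}$-free graphs on few vertices are in place.
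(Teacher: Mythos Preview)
Your approach is correct and genuinely different from the paper's. The paper argues case by case on $l\in\{7,6,\le 5\}$: in each case it looks at a longest linear path among the $F_i$'s, shows that if this path has length $\le 3$ then splicing in $E_1$ or $E_2$ already closes a short cycle, and if the path has length $\ge 4$ then a crude weight estimate on a $3$-subpath forces $w\le 3$; the residual $w=3$ situations are then finished by inspecting two explicit pictures (Figures~1 and~2). Your route instead encodes the intersection pattern of the $F_i$ in the auxiliary ``bundle'' hypergraph $\mathcal B$ on $[l]$, proves the clean identity $|D|=lw-2\sum_B\lfloor|B|/2\rfloor$, and converts the covering constraint into the single inequality $\sum_B\lfloor|B|/2\rfloor\ge (l-2)w/2$, which you then play off against the extremal edge count of a $\{C_3,C_4\}$-free (hyper)graph on $l\le 7$ vertices.

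What each buys: your method dispatches all $w\ge 4$ in one uniform stroke and makes transparent \emph{why} the problem collapses to $w=3$ --- the identity plus the tiny extremal numbers leave no room. The paper's path argument is more ad hoc but avoids the auxiliary construction and the (not entirely obvious) step that $\sum_B(|B|-1)$ is bounded by $\mathrm{ex}(l;\{C_3,C_4\})$; you should spell that step out --- e.g.\ by replacing each $B$ by a spanning path and checking, using linearity and Berge-girth $\ge 5$ of $\mathcal B$, that the resulting graph on $[l]$ is simple and $\{C_3,C_4\}$-free. For $w=3$ the two proofs converge: both reduce to a short finite list of configurations and then exhibit a Berge $3$- or $4$-cycle through $E_1$ or $E_2$. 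Your ``distance $\le 2$ in $\mathcal B$'' heuristic is the right endgame, but note that for $l=7$ the $8$-edge extremal graph on seven vertices is not unique, so the final check really does require going through the handful of possibilities rather than a single picture.
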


\begin{proof}[Proof of Proposition \ref{xcandhgofg}]
First, note that the girth of $\mathcal{H}$ is at least $5$, we know that $\mathcal{H}$ is a linear hypergraph, i.e., $|E_1\cap E_2|\leq 1$ for any $E_1,E_2\in \mathcal{E}$. Hence, if we denote $\mathbf{v}_{E_i}$ as the indicator vector of hyperedge $E_i$, then for any $\{E_{1},\ldots,E_{7}\}\subseteq{\mathcal{E}}$ and any $s$-subset $I_s\subseteq [7]$ with $1\leq s\leq 7$, we have
\begin{equation*}
W(\bigoplus_{i\in I_s}\mathbf{v}_{E_{i}})\geq s\cdot(w-s+1).
\end{equation*}
Moreover, for every $E\in \mathcal{E}$, $\mathbf{v}_{E}$ can't be covered by the superimposed sum of the indicator vectors of any other two edges in $\mathcal{E}$. For each $2\leq s\leq 7$ and an $s$-subset $I_s\subseteq [7]$, consider the subhypergraph spanned by $\{E_i\}_{i\in{I_s}}$, we denote $V_0(I_s)$ as the set of vertices with even degree in this subhypergraph and $V_1(I_s)$ as the set of vertices with odd degree in this subhypergraph.

Let $\mathcal{C}$ be the set of indicator vectors of all edges in $\mathcal{E}$, according to the restrictions of the $(|V|,|\mathcal{E}|,7,2)$ $X$-code, our proof is divided into the following three parts.

\textbf{Case 1.} Assume that there exist $\{E_i\}_{i=1}^{9}\subseteq \mathcal{E}$ such that $\bigoplus_{i\in [7]}\mathbf{v}_{E_i}$ is covered by $\mathbf{v}_{E_8}\vee \mathbf{v}_{E_9}$. %From the linearity, $\mathbf{v}_{E_8}\vee \mathbf{v}_{E_9}$ can cover at most $2$ distinct vertices in each $\mathbf{v}_{E_i}$ ($1\leq i\leq 7$), thus we have $7(w-6)\leq 14$, which indicates that $w\leq 8$.

When the length of the longest linear path in the subhypergraph formed by $\{E_i\}_{i=1}^{7}$ is at most $3$, consider a longest linear path $\mathbb{P}^{(7)}$ formed by edges $\{E_i\}_{i\in S}$ for some subset $S\subseteq [7]$ of size at most $3$. Since $\mathcal{H}$ has girth at least $5$, therefore, by the maximality of $\mathbb{P}^{(7)}$, the starting edge $E_{i_s}$ and the ending edge $E_{i_e}$ of $\mathbb{P}^{(7)}$ are disjoint with all edges in $\{E_i\}_{i\in [7]\setminus S}$. Therefore, by $w\geq 3$, we have $|V_1(S)\cap E_{i_s}|,|V_1(S)\cap E_{i_e}|\geq 2$. Note that $V_{1}(S)$ is covered by $\mathbf{v}_{E_8}\vee \mathbf{v}_{E_9}$. This forces $E_8$ (or $E_9$) together with $\mathbb{P}^{(7)}$ to form a cycle of length at most $4$, which contradicts the requirement of $\mathcal{H}$ having girth at least $5$.

When the length of the longest linear path in the subhypergraph formed by $\{E_i\}_{i=1}^{7}$ is at least $4$, consider a linear $3$-path $\mathbb{P}_{1}^{(7)}$ formed by edges $\{E_i\}_{i\in S_1}$ for some $3$-set $S_1\subseteq [7]$. The vector $\bigoplus_{i\in S_1}\mathbf{v}_{E_i}$ has weight
\begin{equation*}
W(\bigoplus_{i\in S_1}\mathbf{v}_{E_i})= 3(w-2)+2.
\end{equation*}
As $\mathcal{H}$ has girth at least $5$, for each $i\in [9]\setminus S_1$, $\mathbf{v}_{E_i}$ has at most one coordinate with value ``$1$'' agreeing with $\bigoplus_{i\in S_1}\mathbf{v}_{E_i}$. Then, the assumption that $\bigoplus_{i\in [7]}\mathbf{v}_{E_i}$ being covered by $\mathbf{v}_{E_8}\vee\mathbf{v}_{E_9}$ leads to $3(w-2)+2\leq 6$. Therefore, we have $w\leq 3$.
%If there exists an $i_0\in [7]\setminus S_1$ such that $E_{i_0}$ together with $\mathbb{P}_{1}^{(7)}$ form a linear $5$-cycle $\mathbb{C}_{1}$, then we have $W(\bigoplus_{i\in S_1\cup\{i_0\}}\mathbf{v}_{E_i})=5(w-2)$.
%As $\mathcal{H}$ has girth at least $5$, for the rest two $i\in [7]\setminus S_1\cup\{i_0\}$, $\mathbf{v}_{E_i}$ has at most one coordinate with value ``$1$'' agreeing with $\bigoplus_{i\in S_1\cup\{i_0\}}\mathbf{v}_{E_i}$. Therefore,
%$$W(\bigoplus_{i\in [7]}\mathbf{v}_{E_i})\geq 7(w-2)-2.$$
%Since $W(\mathbf{v}_{E_8}\vee \mathbf{v}_{E_9})\leq 2w$, this implies that $w\leq 3$. If none of $\{E_{i_0}\}_{i_0\in [7]\setminus S_1}$ form a linear $5$-cycle with $\mathbb{P}_{1}^{(7)}$, w.l.o.g., we can assume that the subhypergraph formed by $\{E_{i_0}\}_{i_0\in [7]}$ has girth at least $6$. Thus, for each $i\in [7]\setminus S_1$, $\mathbf{v}_{E_i}$ has at most one coordinate with value ``$1$'' agreeing with $\bigoplus_{i\in S_1}\mathbf{v}_{E_i}$. And then, we have $W(\bigoplus_{i\in [7]}\mathbf{v}_{E_i})\geq 7(w-2)+1$, which leads to $w\leq 2$, a contradiction.

Take $I_s$ as $[7]$, then the assumption indicates that $V_1([7])\subseteq E_8\cup E_9$. Since $w=3$, we have $|V_1([7])|\leq 6$. One can easily check this only holds when the configuration formed by $\{E_1,\ldots,E_{7}\}$ is isomorphic to the subhypergraph shown in Fig. 1. Since there are $5$ distinct vertices with odd degree in this configuration, the assumption that $\bigoplus_{i\in [7]}\mathbf{v}_{E_i}$ being covered by $\mathbf{v}_{E_8}\vee\mathbf{v}_{E_9}$ forces that $E_8$ forms a linear cycle of length at most $4$ with $2$ or $3$ distinct hyperedges in $\{E_1,\ldots,E_{7}\}$, a contradiction.

\begin{figure*}%[H]
  \centering
  \includegraphics[width=5cm]{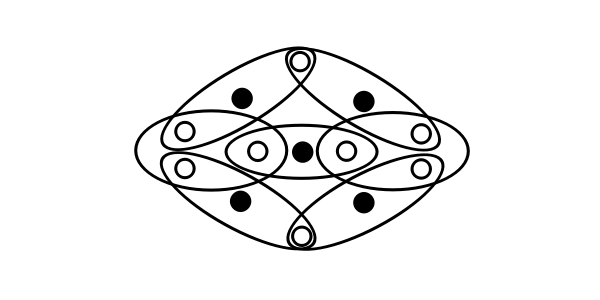}
  \caption{Subhypergraph formed by $\{E_1,\ldots,E_{7}\}$ with $5$ odd vertices, where vertices with odd degree are denoted as ``$\bullet$'' and vertices with even degree are denoted as ``$\circ$''. }\label{pic04}
\end{figure*}

%\begin{figure}
%  \centering
%  % Requires \usepackage{graphicx}
%  \scalebox{1}[1]{\includegraphics[width=6cm]{pic02.png}}\quad
%  \scalebox{1}[1]{\includegraphics[width=6cm]{pic03.png}}\quad
%  \scalebox{1}[1]{\includegraphics[width=6cm]{pic04.png}}\quad
%  \scalebox{1}[0.6]{\includegraphics[width=6cm]{pic05.png}}
%  \caption{Configurations formed by $\{E_i\}_{i=1}^{5}$ with the length of the longest linear path equal to $4$, where $\bullet$ stands for the vertex with odd degree and $\circ$ stands for the vertex with even degree.}\label{pic234}
%\end{figure}

%\begin{figure}
%  \centering
%  % Requires \usepackage{graphicx}
%  \includegraphics[width=11cm]{pic01.png}\\
%  \caption{Configuration formed by $\{E_i\}_{i=1}^{5}$ with the length of the longest linear path equal to $5$, where $\bullet$ stands for the vertex with odd degree and $\circ$ stands for the vertex with even degree.}\label{pic01}
%\end{figure}

\textbf{Case 2.} Assume that there exist $\{E_i\}_{i=1}^{8}\subseteq \mathcal{E}$ such that $\bigoplus_{i\in [6]}\mathbf{v}_{E_i}$ is covered by $\mathbf{v}_{E_7}\vee \mathbf{v}_{E_8}$. %From the linearity, we have $6(w-5)\leq 12$, which indicates $w\leq 7$.

Similar to the analysis in Case 1, when the length of the longest linear path in the subhypergraph formed by $\{E_i\}_{i=1}^{6}$ is at most $3$, the assumption that $\bigoplus_{i\in [6]}\mathbf{v}_{E_i}$ being covered by $\mathbf{v}_{E_7}\vee \mathbf{v}_{E_8}$ forces $E_7$ (or $E_8$) together with one of the longest linear path to form a cycle of length at most $4$, a contradiction.

When the length of the longest linear path in the subhypergraph formed by $\{E_i\}_{i=1}^{6}$ is at least $4$, consider a $3$-path $\mathbb{P}^{(6)}$ formed by $\{E_i\}_{i\in S_2}$ for some $3$-subset $S_2\subseteq [6]$, we have
\begin{equation*}
W(\bigoplus_{i\in S_2}\mathbf{v}_{E_i})= 3(w-2)+2.
\end{equation*}
As $\mathcal{H}$ has girth at least $5$, for each $i\in [8]\setminus S_2$, $\mathbf{v}_{E_i}$ has at most one coordinate with value ``$1$'' agreeing with $\bigoplus_{i\in S_2}\mathbf{v}_{E_i}$. Therefore, the assumption that $\bigoplus_{i\in [6]}\mathbf{v}_{E_i}$ being covered by $\mathbf{v}_{E_7}\vee \mathbf{v}_{E_8}$ implies that $3(w-2)+2\leq 5$. Thus, we have $w\leq 3$.

Take $I_s$ as $[6]$, then we have $|V_1([6])|\leq 6$. One can easily check this only holds when the configuration formed by $\{E_1,\ldots,E_{6}\}$ is isomorphic to one of the subhypergraphs shown in Fig. 2. Since there are $6$ distinct vertices with odd degree in this configuration, the assumption that $\bigoplus_{i\in [6]}\mathbf{v}_{E_i}$ being covered by $\mathbf{v}_{E_7}\vee\mathbf{v}_{E_8}$ forces that $E_8$ forms a linear cycle of length at most $4$ with $2$ or $3$ distinct hyperedges in $\{E_1,\ldots,E_{6}\}$, a contradiction.

\begin{figure*}%[H]
  \centering
    \scalebox{1}[1]{\includegraphics[width=3.75cm]{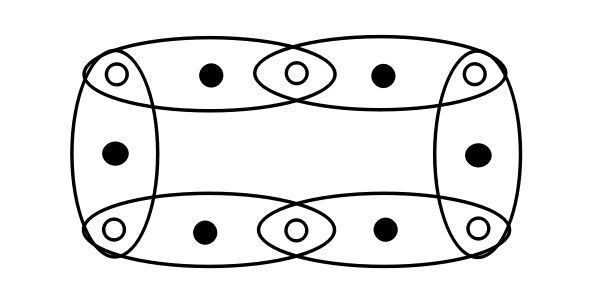}}\quad
    \scalebox{1}[1.05]{\includegraphics[width=3.75cm]{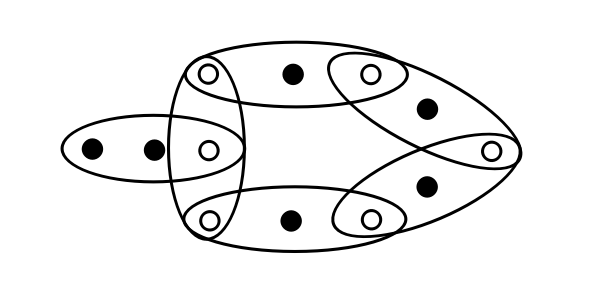}}\quad
    \caption{Subhypergraphs formed by $\{E_1,\ldots,E_{6}\} $ with $6$ odd vertices, where vertices with odd degree are denoted as ``$\bullet$'' and vertices with even degree are denoted as ``$\circ$''.}\label{pic03}
\end{figure*}

\textbf{Case 3.} For each $4\leq l\leq 7$, assume that there exist $\{E_i\}_{i=1}^{l}\subseteq \mathcal{E}$ such that $\bigoplus_{i\in [l-2]}\mathbf{v}_{E_i}$ is covered by $\mathbf{v}_{E_{l-1}}\vee \mathbf{v}_{E_{l}}$. Similar to the analysis in Case 1 and Case 2, we only have to consider the case when the length of the longest linear path in the configuration formed by $\{E_i\}_{i=1}^{l-2}$ is at least $4$.

Let $\mathbb{P}^{(l-2)}$ be a $3$-path in this subgraph formed by $\{E_i\}_{i\subseteq S_3}$ for some $3$-subset $S_3\subseteq [l-2]$, we have
\begin{equation*}
W(\bigoplus_{i\in S_3}\mathbf{v}_{E_i})= 3(w-2)+2.
\end{equation*}
Again, by the girth restriction of $\mathcal{H}$, for each $i\in [l]\setminus S_3$, $\mathbf{v}_{E_i}$ has at most one coordinate with value ``$1$'' agreeing with $\bigoplus_{i\in S_3}\mathbf{v}_{E_i}$. Therefore, the assumption above indicates that $3(w-2)+2\leq (l-3)$, which leads to $w\leq 2$. This contradicts the fact that $w\geq 3$.

In conclusion, the addition of any seven or fewer distinct vectors in $\mathcal{C}$ can not be covered by the superimposed sum of any other two vectors in $\mathcal{C}$. Therefore, $\mathcal{C}$ is a $(|V|,|\mathcal{E}|,7,2)$ $X$-code of constant weight $w$.
\end{proof}

Based on a construction of $3$-uniform hypergraphs of girth at least five in \cite{LV2003}, by Proposition \ref{xcandhgofg}, we have the following result.

\begin{theorem}\label{xcofcw3d4x2}
For any odd prime power $q$, there exists a $(q(q-1),{q\choose 3},7,2)$ $X$-code of constant weight $3$.
\end{theorem}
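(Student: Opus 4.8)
The plan is to obtain the theorem directly from Proposition~\ref{xcandhgofg} by feeding it a suitable $3$-uniform hypergraph. Taking $w=3$ in Proposition~\ref{xcandhgofg}, the indicator vectors of the hyperedges of \emph{any} $3$-uniform hypergraph $\mathcal{H}=(V,\mathcal{E})$ of girth at least $5$ form a $(|V|,|\mathcal{E}|,7,2)$ $X$-code of constant weight $3$. Hence it suffices to produce such an $\mathcal{H}$ with $|V|=q(q-1)$ and $|\mathcal{E}|=\binom{q}{3}$; this is exactly what the construction of Lazebnik and Verstra\"ete \cite{LV2003} provides.

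Concretely, I would take the vertex set $V=\mathbb{F}_q\times\mathbb{F}_q^{*}$, so that $|V|=q(q-1)$, and, for every $3$-subset $\{a,b,c\}\subseteq\mathbb{F}_q$, the hyperedge
\begin{equation*}
E_{\{a,b,c\}}=\bigl\{\,(a,(b-a)(c-a)),\ (b,(a-b)(c-b)),\ (c,(a-c)(b-c))\,\bigr\}.
\end{equation*}
Several points are routine. The second coordinates are nonzero since $a,b,c$ are distinct, so $E_{\{a,b,c\}}\subseteq V$; the rule is symmetric in $a,b,c$, so it is well defined on unordered triples; distinct triples yield distinct hyperedges (the triple is recovered as the set of first coordinates), whence $|\mathcal{E}|=\binom{q}{3}$; and for $q\ge 5$ every vertex of $V$ actually occurs in some hyperedge. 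Linearity (no $2$-cycle) is a two-line computation: two hyperedges sharing two vertices must share two first coordinates $x\ne x'$, so the underlying triples are $\{x,x',y\}$ and $\{x,x',y'\}$, and then the second coordinate at $x$ is $(x'-x)(y-x)$ in one edge and $(x'-x)(y'-x)$ in the other, forcing $y=y'$, a contradiction.

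The substantive part — and the step I expect to be the main obstacle — is verifying that $\mathcal{H}$ has no cycle of length $3$ or $4$. Here one encodes a short cycle as a system of coupled equations of the shape $(y-x)(z-x)=(y'-x)(z'-x)$ over the few vertices shared along the cycle, and argues that no admissible configuration of triples can satisfy all of them at once; this reduces to bounding the number of roots of a low-degree polynomial over $\mathbb{F}_q$ (and is where the hypothesis that $q$ is odd is used), and it is precisely the girth-five analysis carried out in \cite{LV2003}, which I would invoke rather than reprove. Alternatively, one may simply quote the Lazebnik--Verstra\"ete theorem verbatim: for every odd prime power $q$ there is a $3$-uniform hypergraph on $q(q-1)$ vertices with $\binom{q}{3}$ hyperedges and girth at least five. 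Either way, applying Proposition~\ref{xcandhgofg} to this $\mathcal{H}$ produces a $(q(q-1),\binom{q}{3},7,2)$ $X$-code of constant weight $3$, which completes the proof.
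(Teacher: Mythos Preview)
Your proposal is correct and follows essentially the same approach as the paper: both feed the Lazebnik--Verstra\"ete girth-five $3$-uniform hypergraph into Proposition~\ref{xcandhgofg}. Your parametrization $E_{\{a,b,c\}}=\{(a,(b-a)(c-a)),\dots\}$ is in fact the paper's construction in disguise---the affine change of variables $(x_1,x_2)\mapsto(x_1,\,x_1^2-2x_2)$ carries the paper's edge relation $a_2+b_2=a_1b_1$ (etc.) and its vertex set $\mathbb{F}_q^2\setminus\{2x_2=x_1^2\}$ exactly onto yours.
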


\begin{proof}[Proof of Theorem \ref{xcofcw3d4x2}]
For any odd prime power $q$, consider the finite field $\mathbb{F}_q$, let $C_q$ denote the set of points on the curve $2x_2={x_1}^2$, where $(x_1,x_2)\in \mathbb{F}_q^{2}$.

Define a hypergraph $\mathcal{G}_q$ with vertex set $V(\mathcal{G}_q)=\mathbb{F}_q^2\setminus C_q$. Three distinct vertices $\mathbf{a}=(a_1,a_2)$, $\mathbf{b}=(b_1,b_2)$ and $\mathbf{c}=(c_1,c_2)$ form a hyperedge $\{\mathbf{a},\mathbf{b},\mathbf{c}\}$ in $\mathcal{G}_q$ if and only if the following three equations hold:
\begin{equation*}
\begin{cases}
a_2+b_2=a_1b_1;\\
b_2+c_2=b_1c_1;\\
c_2+a_2=c_1a_1.
\end{cases}
\end{equation*}

As claimed in \cite{LV2003} (see the Remark on page 9 in \cite{LV2003}), $\mathcal{G}_q$ has girth at least five. Clearly, there are ${q\choose 3}$ choices for distinct numbers $a_1$, $b_1$ and $c_1$, and each choice uniquely specifies $a_2$, $b_2$ and $c_2$ satisfying the above three equations. This indicates that any two choices of the triple $\{a_1,b_1,c_1\}$ being the same will lead to identical corresponding hyperedges. Therefore, the number of hyperedges in $\mathcal{G}_q$ is precisely ${q\choose 3}$. By Proposition \ref{xcandhgofg}, we obtain a $(q(q-1),{q\choose 3},7,2)$ $X$-code of constant weight $3$.
\end{proof}

\begin{remark}\label{remark3}
The construction from Theorem \ref{xcofcw3d4x2} actually gives a lower bound on $M_3(m,7,2)$ of the form
\begin{equation*}
M_3(m,7,2)=\Omega (m^{\frac{3}{2}}),
\end{equation*}
for sufficiently large $m$. This is better than the lower bound given by (\ref{eq52}) in Theorem \ref{upbforcwxc} in this case, however, compared to the upper bound $o(m^2)$ given by Tsunoda and Fujiwara \cite{TF2018}, there is still a gap.

Unfortunately, this construction can not be extended to obtain general constant weighted $X$-codes. But at least, together with Proposition \ref{xcandhgofg}, it provides a way for constructing large constant weighted $X$-codes with $d=7$ and $x=2$.
\end{remark}

\subsection{An improved lower bound for $X$-codes of constant weight $3$ with $x=2$}

Notice that when taking $w=x+1$ in Theorem \ref{upbforcwxc}, the general lower bound given by (\ref{eq52}) is only a linear function of $m$ for $d\geq 2$. Through an elaborate analysis of the connection between a special kind of $3$-uniform hypergraphs and $X$-codes of constant weight $3$, we prove the following theorem, which improves this lower bound to $\Omega(m^{\frac{9}{7}})$.
%Very recently, Tsunoda and Fujiwara \cite{TF2019} improved this lower bound for $X$-codes of constant weight $3$ with $x=2$ to $\Omega(m^{\frac{6}{5}})$.

\begin{theorem}\label{lbxcwithw3x2}
For any positive integer $d\geq 8$ and sufficiently large $m$, there exists an $(m,c\cdot m^{\frac{9}{7}},d,2)$ $X$-code of constant weight $3$, where $c>0$ is an absolute constant.
\end{theorem}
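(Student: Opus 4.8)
The plan is to recast the problem as one about sparse linear $3$-uniform hypergraphs and then to construct such a hypergraph by a deletion argument fed into the Duke--Lefmann--R\"odl bound (Lemma~\ref{DLR94}). First I would set up the hypergraph reformulation, exactly as in Proposition~\ref{xcandhgofg}: identify a constant weight $3$ code with the $3$-uniform hypergraph $\mathcal{H}$ whose edges are the supports of the codewords, so the number of codewords is $|E(\mathcal{H})|$. Unwinding Definition~\ref{xcodedef} with $x=2$, the code is an $(m,|E(\mathcal{H})|,d,2)$ $X$-code precisely when, for every pair of distinct edges $E_1,E_2$ and every family of $l\le d$ further distinct edges $F_1,\dots,F_l$, the set $T:=\mathrm{supp}\bigl(\bigoplus_{j=1}^l \mathbf{v}_{F_j}\bigr)$ of odd-degree vertices of $F_1,\dots,F_l$ is not contained in $E_1\cup E_2$. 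This forces $\mathcal{H}$ to be linear and (taking $T=\emptyset$) to be $d$-\emph{even-free}, i.e.\ no $2\le k\le d$ of its edges sum to $\mathbf{0}$ over $\mathbb{F}_2$ --- an ``$r$-even-free triple packing'' in the terminology of the abstract.

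The heart of the first step is the structural claim that, given $d$-even-freeness, every violation contains a copy of one of finitely many \emph{sparse forbidden configurations}. If the configuration $\{E_1,E_2,F_1,\dots,F_l\}$ had a proper subset summing to $\mathbf 0$, deleting it yields a smaller violation; so a minimal violation has no such subset, and then a degree count --- every vertex of total degree $1$ in the configuration must belong to $E_1$ or $E_2$, hence there are at most six of them, while $3e=\sum_v\deg_{\mathcal H}(v)$ --- bounds its vertex count $v$ by about $\tfrac32 e+O(1)$ for a configuration with $e\le d+2$ edges, also using the parity relation $|T|\equiv l\pmod 2$. Working out the small cases by hand (e.g.\ a loose triangle whose boundary is itself an edge, a Pasch configuration, must be forbidden), one obtains a bounded list whose extremal member is, up to isomorphism, a six-edge even cover together with two further triples: $e=8$ edges on $v=15$ vertices, with $(v-3)/(e-1)=\tfrac{12}{7}$, and no forbidden configuration with $\le d+2$ edges is denser.

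Next I would build $\mathcal H$ on vertex set $[m]$. Let $\mathcal G$ be the conflict hypergraph on $\binom{[m]}{3}$ whose hyperedges are all copies of the forbidden configurations $\Gamma_i$ and of all even covers of size $\le d$; an independent set of $\mathcal G$ is the edge set of an admissible $\mathcal H$. Keep each triple independently with probability $p=m^{-12/7+\varepsilon_0}$, $\varepsilon_0>0$ a small absolute constant: every forbidden configuration other than the extremal type then has $o(pm^3)$ expected copies and is removed by deleting one triple per copy, leaving $\Theta(m^{9/7+\varepsilon_0})$ triples, while the surviving copies of the extremal $\Gamma^{*}$ form an $8$-uniform hypergraph $\mathcal G'$ that is linear (two copies meeting in $\ge 2$ triples would span a smaller, already-deleted forbidden configuration) with maximum degree $D=O(m^{12}p^{7})=O(m^{7\varepsilon_0})$. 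Lemma~\ref{DLR94} with $k=8$ applied to $\mathcal G'$ yields an independent set of size $\Omega\bigl(m^{9/7+\varepsilon_0}(\log D/D)^{1/7}\bigr)=\Omega\bigl(m^{9/7}(\log m)^{1/7}\bigr)$, and the resulting $\mathcal H$ --- linear, $d$-even-free, and $\Gamma_i$-free --- gives the required $(m,c\cdot m^{9/7},d,2)$ $X$-code of constant weight $3$ with $c>0$ absolute. The exponent does not depend on $d$ because, by the reformulation, configurations with many edges are automatically sparser, so $(v-3)/(e-1)$ is maximized already at $e=8$.

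The main obstacle is the structural analysis underlying the first step: making the even-cover reduction rigorous and, above all, the combinatorial bookkeeping that identifies the extremal sparse forbidden configuration and proves that nothing denser than the $15$-vertex, $8$-edge configuration occurs, since this is exactly what pins the exponent at $\tfrac97$ rather than something larger. This requires a careful case analysis of how $l\le d$ triples with $|T|\le 6$ can sit inside the union of two triples in a linear hypergraph, tracking parities and vertex coincidences. A secondary point is the deletion step: the even covers occur in all sizes up to $d$, so one must arrange to kill them together with the finite list $\Gamma_i$ while keeping $\mathcal G'$ linear and bounded-degree --- which is why one first passes to the random subhypergraph and only then invokes Lemma~\ref{DLR94}. (For $d\le 5$ this construction is unnecessary, being dominated by the $m^{2-\varepsilon}$ bound of Theorem~\ref{lwbforcwxc}, which is why the statement is restricted to $d\ge 6$.)
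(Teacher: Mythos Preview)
Your overall strategy---identify forbidden configurations, compute the critical density $(v-3)/(e-1)$, and build the hypergraph by random sampling plus deletion---is on the right track, but the execution differs from the paper in an important way and contains a genuine gap.

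\textbf{The gap.} Your own degree count gives $v\le (3e+6)/2$ for a minimal violation on $e$ edges, so $(v-3)/(e-1)\le 3e/(2(e-1))$, which is \emph{decreasing} in $e$: at $e=4$ it is $2$, at $e=6$ it is $9/5$, and only at $e=8$ does it drop to $12/7$. Thus your claim that ``no forbidden configuration with $\le d+2$ edges is denser'' than the $e=8$, $v=15$ one is not supported by the degree bound alone; small-$e$ violations are formally denser, and you have not shown they are impossible or already subsumed. Moreover, the object you name as extremal---``a six-edge even cover together with two further triples''---is not a minimal conflict at all: the six-edge even cover is itself a violation (of $d$-even-freeness for $d\ge 6$), so the two extra triples are superfluous, and the genuine minimal conflict there has $e=6$, $v\le 9$, density $6/5$. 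In short, the bookkeeping that is supposed to pin the exponent at $9/7$ is not carried out correctly.

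\textbf{How the paper proceeds.} The paper avoids this entirely by working with the \emph{inner} triples only. Lemma~\ref{xcandshgofg} shows that it suffices for $\mathcal{H}$ to be simultaneously $\mathcal{G}_3(2s,s)$-free for $2\le s\le 4$ (this gives girth $\ge 5$, which via Proposition~\ref{xcandhgofg} handles all $l\le 7$) and $\mathcal{G}_3\bigl(\lceil (3s-1)/2\rceil+3,\,s\bigr)$-free for $8\le s\le d$. These are conditions on $s$ triples alone, not on $s+2$; the densest is at $s=8$ (eight triples on at most $15$ vertices, ratio $12/7$), while the girth conditions have ratio at most $5/3<12/7$. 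With this sufficient condition in hand, the paper uses only the \emph{elementary} deletion argument: sample triples with probability $p=\tfrac{1}{30}m^{-12/7}$, check $\mathbb{E}[|D_s|]=o(\mathbb{E}[|\mathcal{B}|])$ for $s\ne 8$ and $\mathbb{E}[|D_8|]\le C p^8 m^{15}=\Theta(m^{9/7})$, and delete one triple per bad configuration. No appeal to Lemma~\ref{DLR94} is made here; that lemma is reserved for Theorem~\ref{lbxcwithhet}, which is where the $(\log m)^{1/5}$ gain appears. Your DLR step, if the structural analysis were repaired, would yield an extra $(\log m)^{1/7}$ over the paper's bound, but it is not needed for the theorem as stated.
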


In graph theory, a $k$-uniform hypergraph $\mathcal{H}$ is called $\mathcal{G}_k(v,e)$-free if the union of any $e$ distinct hyperedges contains at least $v+1$ vertices. These kinds of hypergraphs are called \emph{sparse hypergraphs}. They are important structures in extremal graph theory and have been well-studied since 1970s (see \cite{AS06,Sudakov2010,Keevash2011,GS2017} and the reference therein). Before we present the proof of Theorem \ref{lbxcwithw3x2}, we need the following lemma.

\begin{lemma}\label{xcandshgofg}
For any $3$-uniform hypergraph $\mathcal{H}=(V,\mathcal{E})$ that is simultaneously $\mathcal{G}_3(2s,s)$-free for each $2\leq s\leq 4$ and $\mathcal{G}_3(\lceil\frac{3s-1}{2}\rceil+3,s)$-free for each $8\leq s\leq d$, the set of all the indicator vectors of hyperedges in $\mathcal{E}$ forms a $(|V|,|\mathcal{E}|,d,2)$ $X$-code of constant weight $3$.
\end{lemma}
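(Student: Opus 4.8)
The plan is to rewrite the $(|V|,|\mathcal E|,d,2)$ $X$-code condition of Definition~\ref{xcodedef} as a statement about the subhypergraph spanned by the ``$\bigoplus$-side'' and then run a case split on its number of hyperedges, using the girth-type hypotheses (the cases $s=2,3,4$) for the small cases and the density hypothesis ($8\le s\le d$) for the large ones. Concretely, with $x=2$ and $w=3$ it suffices to show: for every $l$ with $1\le l\le d$, every $l$-set of distinct hyperedges $E_1,\dots,E_l$, and every pair of distinct hyperedges $F_1,F_2\notin\{E_1,\dots,E_l\}$, the vector $\bigoplus_{i=1}^{l}\mathbf v_{E_i}$ is not covered by $\mathbf v_{F_1}\vee\mathbf v_{F_2}$. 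Letting $\mathcal K$ be the subhypergraph spanned by $E_1,\dots,E_l$ and $V_1,V_0$ be, respectively, its set of odd-degree and of positive even-degree vertices, the support of $\bigoplus_i\mathbf v_{E_i}$ equals $V_1$ and the support of $\mathbf v_{F_1}\vee\mathbf v_{F_2}$ equals $F_1\cup F_2$, so the claim is exactly $V_1\nsubseteq F_1\cup F_2$. I would argue by contradiction: assume $V_1\subseteq F_1\cup F_2$. Since $\mathcal G_3(4,2)$-freeness (the case $s=2$) forces $\mathcal H$ to be linear, every vertex of $V_0$ has degree $\ge 2$ in $\mathcal K$, so double counting incidences gives $|V(\mathcal K)|=|V_0|+|V_1|\le(3l+|V_1|)/2$; together with $|V_1|\le|F_1\cup F_2|\le 6$ and the parity relation $|V_1|\equiv l\pmod 2$, these three facts drive everything.

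For the large regime $8\le l\le d$, $\mathcal G_3(\lceil(3l-1)/2\rceil+3,\,l)$-freeness forces $|V(\mathcal K)|\ge\lceil(3l-1)/2\rceil+4>(3l+6)/2$, contradicting the counting bound. When $l\in\{6,7\}$ and $l+2\le d$ I would feed the $(l+2)$-hyperedge union $U=E_1\cup\cdots\cup E_l\cup F_1\cup F_2$ into the same hypothesis: since $V_1\subseteq F_1\cup F_2$ we have $U=V_0\cup F_1\cup F_2$, so $|U|\le|V_0|+6\le(3l-|V_1|)/2+6$, which is strictly smaller than the $\ge\lceil(3(l+2)-1)/2\rceil+4$ that $\mathcal G_3(\lceil(3(l+2)-1)/2\rceil+3,\,l+2)$-freeness demands.

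For the small regime, the case $l=1$ is immediate: if $E_1\subseteq F_1\cup F_2$ then by pigeonhole some $F_i$ contains two vertices of $E_1$, contradicting linearity (a variant of this also settles the ``two disjoint edges'' subcase of $l=2$). For $2\le l\le4$, $\mathcal G_3(2l,l)$-freeness gives $|V(\mathcal K)|\ge 2l+1$, and combining this with the counting bound pins $|V_1|$ (forcing $|V_1|=6$ if $l=4$, $|V_1|=5$ if $l=3$, $|V_1|\in\{4,6\}$ if $l=2$) and then the whole degree sequence of $\mathcal K$: every vertex of $V_1$ has degree exactly $1$, and $\mathcal K$ has exactly $(3l-|V_1|)/2\le 3$ vertices of degree $2$. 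Thus $\mathcal K$ is a linear cluster of $l$ hyperedges with at most three pairwise intersections; the ``loose triangle'' patterns among these are already excluded by $\mathcal G_3(6,3)$-freeness, so $\mathcal K$ is a loose path or a loose star whose $V_1$ consists precisely of its degree-$1$ (private) vertices. Covering all of $V_1$ by $F_1\cup F_2$ while keeping $|F_i\cap E_j|\le 1$ forces $F_1$ (say) to contain a private vertex of some $E_j$ and a private vertex of an $E_{j'}$ joined to $E_j$ through the path/star by a subpath of length $\le 2$; then $F_1$ together with those one or two $E$-edges closes a linear $3$- or $4$-cycle whose vertex union has at most $6$ or $8$ vertices, contradicting $\mathcal G_3(6,3)$- or $\mathcal G_3(8,4)$-freeness.

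The hard part will be the intermediate regime — $l=5$ for every $d$, and $l\in\{6,7\}$ when $l+2>d$ — where none of the sparsity hypotheses applies directly to an $l$-edge or $(l+2)$-edge subhypergraph. Here I would borrow the longest-linear-path dichotomy from the proof of Proposition~\ref{xcandhgofg}: if $\mathcal K$ contains a linear path of length $\ge 4$, then a $3$-subpath has weight $\le 3(w-2)+2=5$ and every remaining $E_i$ agrees with it in at most one coordinate, so (after folding in $F_1,F_2$) the relevant union is too large to be absorbed, given the counting bound and $|V_1|\le 6$; if instead every linear path in $\mathcal K$ has length $\le 3$, then $\mathcal K$ lies inside the union of at most four of its hyperedges, so a suitable $3$- or $4$-edge sub-configuration has fewer than $2s+1$ vertices, or contains a short cycle — again contradicting $\mathcal G_3(6,3)$- or $\mathcal G_3(8,4)$-freeness. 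Making this dichotomy quantitatively airtight for the three values $l=5,6,7$, and carefully bookkeeping the boundary values of $d$, is the delicate step; everything else reduces to the counting inequality together with the girth-type conditions.
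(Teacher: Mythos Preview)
Your counting set-up and the $l\ge 8$ case are exactly what the paper does: the inequality $|V_0|+|V_1|\le (3l+|V_1|)/2$ together with $|V_1|\le 6$ and the parity relation $|V_1|\equiv l\pmod 2$ gives $|V(\mathcal K)|\le \lceil(3l-1)/2\rceil+3$, directly contradicting the $\mathcal G_3(\lceil(3l-1)/2\rceil+3,l)$-freeness hypothesis. So that part is fine.

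Where you diverge from the paper is in the treatment of $l\le 7$, and here you are working much harder than necessary. The paper's proof is two lines long at this point: the three conditions $\mathcal G_3(4,2)$-, $\mathcal G_3(6,3)$-, $\mathcal G_3(8,4)$-free are \emph{exactly} the statement that $\mathcal H$ has no $2$-, $3$-, or $4$-cycles, i.e.\ that $\mathcal H$ has girth at least $5$. Proposition~\ref{xcandhgofg} then says immediately that the indicator vectors form a $(|V|,|\mathcal E|,7,2)$ $X$-code, disposing of every $l\le 7$ at once. There is no need for your separate $l=1,2,3,4$ case analysis, no need for the $(l+2)$-edge trick for $l=6,7$, and no need for the longest-path dichotomy you flag as ``the hard part'' for $l=5,6,7$. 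All of that is already inside Proposition~\ref{xcandhgofg}, which you cite but never invoke as a black box. Your ad hoc arguments for $l\le 4$ are correct, and the $(l+2)$-edge trick is a nice idea, but the dichotomy sketch for $l=5$ (and for $l=6,7$ when $d<l+2$) is left vague and would amount to reproving the relevant cases of Proposition~\ref{xcandhgofg} from scratch; recognising the girth-$5$ reformulation lets you skip this entirely.
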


\begin{proof}[Proof of Lemma \ref{xcandshgofg}]
Consider a $3$-uniform hypergraph $\mathcal{H}_0=(V_0,\mathcal{E}_0)$ that is simultaneously $\mathcal{G}_3(2s,s)$-free for each $2\leq s\leq 4$ and $\mathcal{G}_3(\lceil\frac{3s-1}{2}\rceil+3,s)$-free for each $8\leq s\leq d$. Since $\mathcal{H}_0$ is $\mathcal{G}_3(2s,s)$-free for each $2\leq s\leq 4$, we know that the girth of $\mathcal{H}_0$ is at least $5$. From the result of Proposition \ref{xcandhgofg}, the set of all the indicator vectors $\mathcal{C}(\mathcal{H}_0)$ corresponding to $\mathcal{E}_0$ already forms a $(|V|,|\mathcal{E}|,7,2)$ $X$-code of constant weight $3$. Therefore, we only have to show that the addition of any $s~(8\leq s\leq d)$ distinct indicator vectors in $\mathcal{C}(\mathcal{H}_0)$ can not be covered by the superimposed sum of any other two indicator vectors in $\mathcal{C}(\mathcal{H}_0)$.

For each $e\in \mathcal{E}_0$, denote $\mathbf{v}_{e}$ as the indicator vector of $e$. For each integer $8\leq s\leq d$, consider $s$ distinct hyperedges $\{e_1,\ldots,e_{s}\}$ in $\mathcal{E}$. Assume that there exist two other hyperedges $f_1$ and $f_2$, such that $\mathbf{v}_{e_1}\oplus\cdots\oplus\mathbf{v}_{e_{s}}$ can be covered by $\mathbf{v}_{f_1}\vee\mathbf{v}_{f_2}$. Denote $V_0$ as the set of vertices in $\bigcup_{i=1}^{s}e_{i}$ that are contained in even number of hyperedges in $\{e_1,\ldots,e_{s}\}$ and $V_1$ as the set of vertices in $\bigcup_{i=1}^{s}e_{i}$ that are contained in odd number of hyperedges in $\{e_1,\ldots,e_{s}\}$. Then the assumption indicates that $V_1\subseteq f_1\cup f_2$. Since $\mathcal{H}$ is a $3$-uniform hypergraph, we have
\begin{equation}\label{8}
|V_1|\leq 6 \text{~and~} 2|V_0|+|V_1|\leq 3s.
\end{equation}

Now, for a fixed integer $8\leq s_0\leq d$, according to inequality (\ref{8}) and the parity of $s_0$, we have
\begin{equation*}
|\bigcup_{i=1}^{s_0}e_{i}|=|V_0|+|V_1|\leq \lceil\frac{3s_0-1}{2}\rceil+3.
\end{equation*}
This implies that these $s_0$ distinct hyperedges $\{e_1,\ldots,e_{s_0}\}$ are spanned by at most $\lceil\frac{3s_0-1}{2}\rceil+3$ distinct vertices in $\mathcal{H}_0$, which contradicts the condition that $\mathcal{H}_0$ is $\mathcal{G}_3(\lceil\frac{3s-1}{2}\rceil+3,s)$-free for each $8\leq s\leq d$. Thus, for each $8\leq s\leq d$, the addition of any distinct $s$ indicator vectors in $\mathcal{C}(\mathcal{H}_0)$ can not be covered by the superimposed sum of other $2$ indicator vectors. Therefore, combined with former analysis, the set of all the indicator vectors in $\mathcal{C}(\mathcal{H}_0)$ forms a $(|V|,|\mathcal{E}|,d,2)$ $X$-code of constant weight $3$.
\end{proof}

%From the first part of the proof of Lemma \ref{xcandshgofg}, the set of indicator vectors of hyperedges of a $3$-uniform hypergraph $\mathcal{H}$ that is $\mathcal{G}_3(2s,s)$-free for each $2\leq s\leq 4$ forms a $(|V|,|\mathcal{E}|,7,2)$ $X$-code of constant weight $3$. Recall the construction in Theorem \ref{xcofcw3d4x2}, one can easily verify that it actually gives a $3$-uniform hypergraph that is $\mathcal{G}_3(2s,s)$-free for each $2\leq s\leq 4$. Therefore, by this observation, the construction in Theorem \ref{xcofcw3d4x2} immediately gives a lower bound $\Omega (m^{\frac{3}{2}})$ of $M_{3}(m,7,2)$.
%
%\begin{corollary}\label{coro1}
%For any odd prime power $q$, there exists a $(q(q-1),{q\choose 3},7,2)$ $X$-code of constant weight $3$.
%\end{corollary}

Now, we present the proof of Theorem \ref{lbxcwithw3x2}.

\begin{proof}[Proof of Theorem \ref{lbxcwithw3x2}]
By Lemma \ref{xcandshgofg}, we only need to construct a $3$-uniform hypergraph $\mathcal{H}_0$ that is simultaneously $\mathcal{G}_3(2s,s)$-free for each $2\leq s\leq 4$ and $\mathcal{G}_3(\lceil\frac{3s-1}{2}\rceil+3,s)$-free for each $8\leq s\leq d$ with $\Omega(m^{\frac{9}{7}})$ hyperedges.

%By Proposition \ref{xcandshgofg}, we only need to construct a $3$-uniform hypergraph $\mathcal{H}_0=(V,\mathcal{E})$ of girth at least $5$ that is also $\mathcal{G}_3(\lceil\frac{3s}{2}\rceil+3,s)$-free for each $6\leq s\leq d$.
%
%Noticed that a $3$-uniform hypergraph $\mathcal{H}_0$ that is $\mathcal{G}_3(2t,t)$-free for each $2\leq t\leq 5$ has girth at least $5$, thus our goal is to construct a $3$-uniform hypergraph $\mathcal{H}_0$ that is simultaneously $\mathcal{G}_3(2t,t)$-free for each $2\leq t\leq 5$ and $\mathcal{G}_3(\lceil\frac{3s}{2}\rceil+3,s)$-free for each $6\leq s\leq d$.

Let $V$ be a finite set of points and $|V|=m$, take a subset $\mathcal{B}$ of triples by picking elements of ${V\choose 3}$ uniformly and independently at random with probability $p$. Then we have
\begin{equation*}
\mathbb{E}[|\mathcal{B}|]=p\cdot{|V|\choose 3}.
\end{equation*}

For each $2\leq s\leq 4$, denote $D_s$ as the set of $s$-subsets in $\mathcal{B}$ that are spanned by at most $2s$ points in $V$, i.e., for each $\{B_1,\ldots,B_s\}\in D_s\subseteq {\mathcal{B}\choose s}$, $|\bigcup_{i=1}^s{B_i}|\leq 2s$. Then we have
\begin{equation*}
p^s\cdot{|V|\choose 2s}\leq\mathbb{E}[|D_s|]\leq {2s\choose 3}^{s}\cdot p^s\cdot{|V|\choose 2s},
\end{equation*}
for each $2\leq s\leq 4$.

For each $8\leq s\leq d$, denote $D_s$ as the set of $s$-subsets in $\mathcal{B}$ that are spanned by at most $\lceil\frac{3s-1}{2}\rceil+3$ points in $V$, i.e., for each $\{B_1,\ldots,B_s\}\in D_s\subseteq {\mathcal{B}\choose s}$, $|\bigcup_{i=1}^s{B_i}|\leq \lceil\frac{3s-1}{2}\rceil+3$. Then we have
\begin{align*}
p^s\cdot{|V|\choose \lceil\frac{3s-1}{2}\rceil+3} &\leq\mathbb{E}[|D_s|]\\
&\leq {{\lceil\frac{3s-1}{2}\rceil+3}\choose 3}^{s}\cdot p^s\cdot{|V|\choose \lceil\frac{3s-1}{2}\rceil+3},
\end{align*}
for each $8\leq s\leq d$.

By deleting at most one triple from each $s$-subset in $D_s$, for $2\leq s\leq 4$ and $8\leq s\leq d$, the remaining triples form a $3$-uniform hypergraph that is simultaneously $\mathcal{G}_3(2s,s)$-free for each $2\leq s\leq 4$ and $\mathcal{G}_3(\lceil\frac{3s-1}{2}\rceil+3,s)$-free for each $8\leq s\leq d$. Now, take $p=\frac{1}{30}\cdot m^{-\frac{12}{7}}$ and $0\leq c\leq \frac{1}{30}$. For $m$ sufficiently large, we have $\mathbb{E}[|D_s|]=o(\mathbb{E}[|\mathcal{B}|])$ for $2\leq s\leq d, s\neq 8$. Therefore, by the linearity of expectation, we have
\begin{align*}
\mathbb{E}[|\mathcal{B}|-\sum_{s=2}^{d}|D_s|]&\geq p\cdot{|V|\choose 3}\cdot(1-o(1))
-\frac{(455)^{8}}{14!}\cdot p^8\cdot|V|^{15}\\
&\geq c\cdot m^{\frac{9}{7}}.
\end{align*}
Therefore, with positive probability, there exists a $3$-uniform hypergraph $\mathcal{H}$ that is simultaneously $\mathcal{G}_3(2s,s)$-free for each $2\leq s\leq 4$ and $\mathcal{G}_3(\lceil\frac{3s}{2}\rceil+3,s)$-free for each $8\leq s\leq d$ with vertex set $V$ and $c\cdot m^{\frac{9}{7}}$ hyperedges. This completes the proof.
\end{proof}

\section{$r$-even-free triple packings and $X$-codes with higher error tolerance}

To construct $X$-codes with $x=2$ and weight $3$, Fujiwara and Colbourn \cite{FC2010} introduced the notion of $r$-even-free triple packing, which was further studied in \cite{TF2018}. In this section, by obtaining an existence result of the corresponding $6$-even-free triple packing, we prove a lower bound on the maximum number of codewords of an $(m,n,1,2)$ $X$-code of constant weight $3$ which can detect up to three erroneous bits if there is only one $X$ in the raw response data and up to six erroneous bits if there is no $X$, this improves the lower bound given in \cite{TF2018}. And we also extend this lower bound to a general case.

%\subsection{$r$-even-free triple packing}

A \emph{triple packing} of order $v$ is a set system $(V,\mathcal{B})$ such that $\mathcal{B}$ is a family of triples of a finite set $V$ and any pair of elements of $V$ appears in $\mathcal{B}$ at most once. Given a triple packing $(V,\mathcal{B})$, we call subset $\mathcal{C}$ in $\mathcal{B}$ an $i$-configuration if $|\mathcal{C}|=i$. A configuration $\mathcal{C}$ is \emph{even} if for every vertex $v\in V$ appearing in $\mathcal{C}$, the number $|\{B:v\in{B}\in{\mathcal{C}}\}|$ of triples containing $v$ is even. And a triple packing $(V,\mathcal{B})$ is \emph{r-even-free} if for every integer $i$ satisfying $1\leq i\leq r$, $\mathcal{B}$ contains no even $i$-configurations.

By carefully analysing the structure of $r$-even-free triple packing, Fujiwara and Colbourn \cite{FC2010} obtained the following theorem which relates the $r$-even-free triple packing to a special kind of $X$-codes.
\begin{theorem}\cite{FC2010}\label{revenfreeandxcode}
For $r\geq 4$, if there exists an $r$-even-free triple packing $(V,\mathcal{B})$, there exists a $(|V|,|\mathcal{B}|,1,2)$ $X$-code of constant weight $3$ that is also a $(|V|,|\mathcal{B}|,3,1)$ $X$-code and a $(|V|,|\mathcal{B}|,r,0)$ $X$-code.
\end{theorem}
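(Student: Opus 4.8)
The plan is to argue directly with the set of indicator vectors $\mathcal{X}=\{\mathbf v_B:B\in\mathcal B\}$ and to unwind Definition~\ref{xcodedef} separately in the three regimes $(d,x)=(1,2),\,(3,1),\,(r,0)$. First I would record a reformulation to be used throughout: for disjoint $S_1,S_2\subseteq\mathcal B$, the equality $(\bigvee_{A\in S_1}\mathbf v_A)\vee(\bigoplus_{B\in S_2}\mathbf v_B)=\bigvee_{A\in S_1}\mathbf v_A$ is equivalent to $\operatorname{supp}(\bigoplus_{B\in S_2}\mathbf v_B)\subseteq\bigcup_{A\in S_1}A$, and that $\operatorname{supp}(\bigoplus_{B\in S_2}\mathbf v_B)$ is exactly the \emph{odd set} of $S_2$, i.e. the set of vertices lying in an odd number of triples of $S_2$; in particular it is empty precisely when $S_2$ is an even configuration. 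Two standing facts will also be used: a triple packing is a linear $3$-uniform hypergraph ($|A\cap B|\le1$ for distinct $A,B\in\mathcal B$), and by hypothesis $\mathcal B$ contains no even $i$-configuration for $1\le i\le r$ (in particular none of size $\le4$, since $r\ge4$).

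The cases $(r,0)$ and $(1,2)$ will be short. For $x=0$ the defining inequality over all $S_2$ with $1\le|S_2|\le r$ just says $\bigoplus_{B\in S_2}\mathbf v_B\neq\mathbf 0$, i.e. $\mathcal B$ has no even $i$-configuration for $1\le i\le r$ — which is the hypothesis. For $x=2,\,d=1$, write $S_1=\{A,A'\}$ and $S_2=\{B\}$ with $A,A',B$ distinct; by linearity $|B\cap A|\le1$ and $|B\cap A'|\le1$, hence $|B\cap(A\cup A')|\le2<3$, so $B\not\subseteq A\cup A'$ and the required inequality holds.

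The substantive case is $(3,1)$: here $x=1$, $d=3$, $S_1=\{A\}$, $S_2=\{B_1,\dots,B_\ell\}$ with $1\le\ell\le3$, and I must show the odd set $W$ of $S_2$ is not contained in the triple $A$. For $\ell=1$, $W=B_1\neq A$ with $|W|=|A|=3$. For $\ell=2$, $|W|=|B_1\triangle B_2|=6-2|B_1\cap B_2|\ge4>|A|$ by linearity. For $\ell=3$, a parity count (the total incidence $\sum_i|B_i|=9$ is odd) shows $|W|$ is odd, while $W\subseteq A$ forces $|W|\le3$; hence $|W|\in\{1,3\}$. If $|W|=3$ then $W=A$, and I would verify vertex-by-vertex that $\{A,B_1,B_2,B_3\}$ is then an even $4$-configuration, contradicting $4$-even-freeness. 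If $|W|=1$, then $B_1\triangle B_2\triangle B_3$ is a single vertex; a short size computation forces $|B_1\cap B_2|=1$ and $B_3\subseteq B_1\triangle B_2$, and since $B_3$ is a $3$-subset of the $4$-set $B_1\triangle B_2=(B_1\setminus B_2)\cup(B_2\setminus B_1)$ it must contain a pair already lying inside $B_1$ or inside $B_2$, contradicting the triple-packing condition. Assembling the three cases completes the proof.

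I expect the only real friction to be the $\ell=3$ subcase of the $(3,1)$-code: keeping the parity bookkeeping straight (that $|W|$ is odd, that containment in a single triple pins $|W|$ to $\{1,3\}$, and that $|W|=3$ is precisely what makes $\{A,B_1,B_2,B_3\}$ even). Everything else reduces to linearity and the definitions. If one prefers to avoid the $4$-set computation in the $|W|=1$ situation, an alternative is to enumerate the finitely many incidence patterns of three triples whose odd set has size $1$ — total incidence $9$ forces the degree multiset to be $\{1,2,2,2,2\}$ on five vertices or $\{3,2,2,2\}$ on four vertices — and to observe that each pattern forces two of the triples to share two vertices, again contradicting the triple-packing condition.
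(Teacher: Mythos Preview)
The paper does not give its own proof of this theorem: it is quoted verbatim from \cite{FC2010} with a citation and no argument, so there is nothing in the present paper to compare your proposal against line by line.

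That said, your proposal is a correct and complete proof. The reformulation of the $X$-code condition as ``the odd set of $S_2$ is not contained in $\bigcup_{A\in S_1}A$'' is exactly the right bookkeeping device, and each of the three regimes is handled cleanly. The $(r,0)$ case is literally the definition of $r$-even-freeness; the $(1,2)$ case is a two-line consequence of linearity; and in the $(3,1)$ case your parity argument that $|W|$ is odd, together with the split into $|W|=3$ (yielding an even $4$-configuration $\{A,B_1,B_2,B_3\}$, which uses $r\ge 4$) and $|W|=1$ (where the identity $|B_1\cap B_2|+|(B_1\triangle B_2)\cap B_3|=4$ forces $B_3$ to sit inside the $4$-set $B_1\triangle B_2$ and hence to share a pair with $B_1$ or $B_2$), is airtight. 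Your alternative degree-multiset enumeration for the $|W|=1$ subcase is also valid and amounts to the same pigeonhole on pairwise intersections. Nothing needs to be added.
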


Using the existence results of \emph{anti-Pasch} Steiner triple systems, Fujiwara and Colbourn \cite{FC2010} proved that for every $m\equiv1,3~(\mod 6)$ and $m\notin\{7,13\}$, there exists an $(m,m(m-1)/6,1,2)$ $X$-code of constant weight $3$ that is an $(m,m(m-1)/6,3,1)$ $X$-code and an $(m,m(m-1)/6,5,0)$ $X$-code. And they also proved the existence of a $6$-even-free triple packing $\mathcal{B}$ of order $m$ with $|\mathcal{B}|=6.31\times 10^{-3}\times m^{1.8}$ using the probabilistic method, which gives a lower bound on the size of the corresponding $X$-code given by Theorem \ref{revenfreeandxcode}.

Recently, according to a complete characterization of all the forbidden even configurations in the $6$-even-free triple packing, Tsunoda and Fujiwara \cite{TF2018} obtained the following result, which improves the lower bound $6.31\times 10^{-3}\times m^{1.8}$ given in \cite{FC2010}.

\begin{theorem}\cite{TF2018}\label{xcwithhet}
For sufficiently large $m$, there exists an $(m,c'\cdot m^{1.8},1,2)$ $X$-code of constant weight $3$ that is also an $(m,c'\cdot m^{1.8},3,1)$ $X$-code and an $(m,c'\cdot m^{1.8},6,0)$ $X$-code, where $c'=\frac{5}{36}(\frac{1}{72})^{\frac{1}{5}}$.
\end{theorem}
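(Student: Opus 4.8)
The plan is to reduce, via Theorem~\ref{revenfreeandxcode}, to a purely combinatorial statement: it suffices to construct, for every sufficiently large $m$, a $6$-even-free triple packing $(V,\mathcal{B})$ with $|V|=m$ and $|\mathcal{B}|\ge c'm^{1.8}$, and then invoke that theorem. I would build such a packing by the alteration (probabilistic deletion) method, exactly as in \cite{FC2010} and in the proof of Theorem~\ref{lbxcwithw3x2} above; the one new ingredient is an exact list of the configurations that have to be destroyed.

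So the first step is to classify the forbidden even configurations. Counting incidences shows $i$ must be even in any even $i$-configuration, and in a triple packing --- a linear $3$-graph --- there is no even $2$-configuration, so only even $4$- and $6$-configurations matter. I would then show, by a short degree argument (no vertex can have degree $\ge 4$, for otherwise the parity of its at least eight ``leaves'' cannot be corrected by the at most two remaining triples), that such a configuration is $2$-regular, hence spans exactly $3i/2$ vertices and is connected. Passing to the dual graph --- whose nodes are the triples and whose edges are the vertices --- it becomes a connected simple cubic graph on $i$ nodes: for $i=4$ this is $K_4$, that is, the Pasch configuration; for $i=6$ there are exactly two, $K_{3,3}$ and the triangular prism. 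Hence a triple packing is $6$-even-free precisely when it avoids the Pasch configuration and the two $3$-uniform hypergraphs dual to $K_{3,3}$ and the prism.

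For the random step, I would put each member of $\binom{[m]}{3}$ into $\mathcal{B}_0$ independently with probability $p=\gamma\,m^{-6/5}$, so that $\mathbb{E}|\mathcal{B}_0|=(1-o(1))\tfrac{\gamma}{6}m^{1.8}$. A fixed configuration with $v$ vertices and $e$ triples occurs in $\mathcal{B}_0$ an expected $O(m^{v}p^{e})$ times: the Pasch configuration ($v=6,e=4$) contributes $O(m^{6/5})=o(m^{1.8})$, while each even $6$-configuration ($v=9,e=6$) contributes $O(m^{9}p^{6})=O(\gamma^{6}m^{1.8})$ --- the exponent $-6/5$ is chosen precisely so that this is of the same order as $\mathbb{E}|\mathcal{B}_0|$. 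Deleting one triple from every forbidden copy leaves a $6$-even-free triple packing whose expected size is at least
\begin{equation*}
\mathbb{E}|\mathcal{B}_0|-\mathbb{E}[\#\text{Pasch}]-\mathbb{E}[\#\text{even }6\text{-configs}]\ \ge\ \Bigl(\tfrac{\gamma}{6}-2\gamma^{6}-o(1)\Bigr)m^{1.8},
\end{equation*}
the coefficient $2$ coming from the crude bound $m^{9}p^{6}$ on the number of copies of each of the two even $6$-configurations. Optimising $\tfrac{\gamma}{6}-2\gamma^{6}$ over $\gamma>0$ gives $\gamma=72^{-1/5}$ with value $\tfrac{5\gamma}{36}=\tfrac{5}{36}(1/72)^{1/5}=c'$, so with positive probability $\mathcal{B}_0$ contains a $6$-even-free sub-packing of size $(1-o(1))c'm^{1.8}$, and feeding it into Theorem~\ref{revenfreeandxcode} completes the argument.

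The hard part is the classification. Its combinatorial core --- reducing to cubic graphs --- is easy, but it rests on the parity bookkeeping that every even $\le 6$-configuration in a triple packing is $2$-regular and connected, and, above all, on being certain that the list is \emph{complete}: an overlooked even configuration would leave the constructed packing not $6$-even-free, whereas an over-generous list only weakens $c'$ (indeed, replacing $m^{9}p^{6}$ by the automorphism-aware $\tfrac{1}{72}m^{9}p^{6}$ and $\tfrac{1}{12}m^{9}p^{6}$ for $K_{3,3}$ and the prism would yield a better constant than the one stated). Everything after the classification is the routine alteration computation: verify the Pasch term is $o(m^{1.8})$ for $p=\Theta(m^{-6/5})$, apply linearity of expectation, and optimise $\gamma$.
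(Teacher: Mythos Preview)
Your approach is correct and matches the argument of \cite{TF2018}; note that the present paper does not itself prove this theorem but only quotes it, recording the key classification as Proposition~\ref{characterization} (Pasch, grid, double triangle) before improving the bound by a different method. Your duals of $K_4$, $K_{3,3}$, and the triangular prism are exactly the Pasch, grid, and double triangle named there, and your alteration computation with $p=\gamma m^{-6/5}$ and the crude bound $2m^{9}p^{6}$ for the two even $6$-configurations recovers the stated constant $c'=\tfrac{5}{36}\cdot 72^{-1/5}$ on the nose.

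One small omission: $\mathcal{B}_0$ is sampled from all of $\binom{[m]}{3}$ and hence is not a priori a triple packing, yet your classification of even $\le 6$-configurations is explicitly carried out \emph{inside} a triple packing. In a non-linear triple system there are additional even configurations not on your list (for instance $\{abc,\,abd,\,cef,\,def\}$ is an even $4$-configuration on six vertices that is not a Pasch), so deleting only Pasches, grids, and double triangles from $\mathcal{B}_0$ need not leave a $6$-even-free triple packing. The fix is to also delete one triple from each pair sharing two vertices, exactly as in the proof of Theorem~\ref{lbxcwithw3x2} you invoke; the expected number of such pairs is $O(m^{4}p^{2})=O(m^{8/5})=o(m^{1.8})$, so this extra deletion does not disturb the leading constant and the rest of your argument stands.
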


Inspired by the probabilistic hypergraph independent set approach introduced by Duke et al. \cite{DLR1995}, we prove the following theorem, which improves the order of magnitude of the lower bound in Theorem \ref{xcwithhet} by a factor of $({\log m})^{\frac{1}{5}}$.

\begin{theorem}\label{lbxcwithhet}
For sufficiently large $m$, there exists an $(m,c_0\cdot m^{\frac{9}{5}}(\log m)^{\frac{1}{5}},1,2)$ $X$-code of constant weight $3$ that is also an $(m,c_0\cdot m^{\frac{9}{5}}(\log m)^{\frac{1}{5}},3,1)$ $X$-code and an $(m,c_0\cdot m^{\frac{9}{5}}(\log m)^{\frac{1}{5}},6,0)$ $X$-code, where $c_0>0$ is an absolute constant.
\end{theorem}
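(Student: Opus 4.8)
The plan is to deduce everything from Theorem~\ref{revenfreeandxcode}: it suffices to construct, for all large $m$, a $6$-even-free triple packing $(V,\mathcal{B})$ with $|V|=m$ and $|\mathcal{B}|=\Omega\big(m^{9/5}(\log m)^{1/5}\big)$, and the three asserted $X$-codes follow at once by taking $r=6$. I would begin from the complete list of structures one must avoid, which is available from \cite{FC2010,TF2018}: a family of triples on $V$ is a $6$-even-free triple packing precisely when it contains no two triples meeting in a pair of points, no Pasch configuration ($4$ triples on $6$ points), and no even $6$-configuration, the parity of degree sums ruling out even $3$- and $5$-configurations automatically. To every such forbidden configuration $\mathcal{C}$, with $v(\mathcal{C})$ vertices and $e(\mathcal{C})$ triples, attach the density $\rho(\mathcal{C})=\frac{v(\mathcal{C})-3}{e(\mathcal{C})-1}$. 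Running over the finite list one checks that $\max_{\mathcal{C}}\rho(\mathcal{C})=\frac{6}{5}$, attained exactly by the $6$-triple, $9$-point even configurations (the $3\times 3$ grid of triples, in which every point has degree $2$); every other forbidden configuration is strictly sparser, and I write $\gamma_0>0$ for the gap between $\frac65$ and the second-largest value of $\rho$.

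For the construction I would use a two-stage probabilistic argument. Fix a constant $c$ with $0<c<\gamma_0$, put $t=m^{c}$ and $p=t\,m^{-6/5}$, and let $\mathcal{B}_0\subseteq\binom{V}{3}$ be the random family obtained by including each triple independently with probability $p$, so that $\mathbb{E}|\mathcal{B}_0|=\Theta\big(t\,m^{9/5}\big)$. In the alteration stage I delete one triple from every occurrence in $\mathcal{B}_0$ of a \emph{subcritical} forbidden configuration, i.e.\ one with $\rho(\mathcal{C})\le\frac65-\gamma_0$ (this includes all pairs of triples sharing two points, all Pasch configurations, and every non-critical even $6$-configuration). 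Since the expected number of copies of such a $\mathcal{C}$ is of order $p^{e(\mathcal{C})}m^{v(\mathcal{C})}$, and $\rho(\mathcal{C})\le\frac65-\gamma_0<\frac65-c$ forces this to be $o\big(p\,m^{3}\big)$, a routine first-moment argument shows that after the alteration a family $\mathcal{B}_1\subseteq\mathcal{B}_0$ remains with $|\mathcal{B}_1|=\Theta\big(t\,m^{9/5}\big)$ that is a triple packing, Pasch-free, and free of all even $6$-configurations except possibly the critical ones.

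The second stage treats the surviving critical configurations as the hyperedges of a $6$-uniform ``conflict'' hypergraph $\mathcal{F}$ on vertex set $\mathcal{B}_1$, so that an independent set of $\mathcal{F}$ is exactly a $6$-even-free triple packing contained in $\mathcal{B}_1$. A fixed triple lies in $O(m^{6})$ critical configurations of $\binom{V}{3}$, whence $\mathbb{E}[\deg_{\mathcal{F}}(T)]=O(p^{5}m^{6})=O(t^{5})$, and a routine first-moment/deletion argument lets me assume the maximum degree is $D=O(t^{5})$; a similar computation bounds the codegree (two critical configurations almost never share two triples), so $\mathcal{F}$ is essentially linear. Feeding this into the hypergraph independent-set bound of Ajtai et al.\ \cite{AKPSS1982} (Lemma~\ref{DLR94}), in the randomized form of Ferber et al.\ \cite{FMS2018} convenient for such sparse random hypergraphs, yields an independent set $\mathcal{B}\subseteq\mathcal{B}_1$ with
\begin{equation*}
|\mathcal{B}|=\Omega\!\left(|\mathcal{B}_1|\Big(\tfrac{\log D}{D}\Big)^{1/5}\right)=\Omega\!\left(t\,m^{9/5}\cdot\tfrac{(\log t)^{1/5}}{t}\right)=\Omega\!\left(m^{9/5}(\log m)^{1/5}\right),
\end{equation*}
using $\log t=c\log m$. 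Then $(V,\mathcal{B})$ is the desired $6$-even-free triple packing, and Theorem~\ref{revenfreeandxcode} completes the proof, improving Theorem~\ref{xcwithhet} by the claimed factor $(\log m)^{1/5}$. The step I expect to be the main obstacle is this second stage: verifying that $\mathcal{F}$ genuinely satisfies the bounded-degree and codegree hypotheses of the Ferber et al.\ machinery --- in particular upgrading the in-expectation degree bound to one holding simultaneously on (almost) all surviving triples, and ruling out degenerate overlaps of two critical configurations --- together with the bookkeeping needed to be certain the enumeration of forbidden configurations is exhaustive, so that the extremal density is exactly $\frac65$ with no configuration exceeding it.
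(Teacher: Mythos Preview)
Your proposal is correct and follows essentially the same route as the paper: reduce via Theorem~\ref{revenfreeandxcode} and Proposition~\ref{characterization} to building a large $6$-even-free triple packing, sample triples randomly with $p=m^{-6/5+o(1)}$, delete the subcritical obstructions (non-linear pairs and Pasches) by alteration, form the $6$-uniform conflict hypergraph whose edges are the remaining critical even $6$-configurations, make it linear with bounded degree by a further deletion (the paper controls exactly the set $Y$ of pairs of critical configurations sharing two triples, and uses Markov to trim high-degree vertices), and then apply Lemma~\ref{DLR94}. The one point to tighten is your enumeration of the critical case: there are \emph{two} non-isomorphic $9$-vertex even $6$-configurations, the grid and the double triangle (see Proposition~\ref{characterization}), both with $\rho=6/5$, so both must be taken as hyperedges of $\mathcal{F}$---your framework already accommodates this, but the parenthetical identifying only the grid is incomplete.
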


An even $4$-configuration is called a \emph{Pasch}, if it has the form $\{\{a,b,c\},\{a,e,f\},\{b,d,f\},\{c,d,e\}\}$. An even $6$-configuration is called a \emph{grid} if it has the form $\{\{a,b,c\},\{d,e,f\},\{g,h,i\},\{a,d,g\},\{b,e,h\},\{c,f,i\}\}$, and a \emph{double triangle} if it has the form $\{\{a,b,c\},\{c,d,e\},\{e,f,g\},\{a,g,h\},\{b,h,i\},\{d,f,i\}\}$. Before we present the proof of Theorem \ref{lbxcwithhet}, we need the following proposition.

\begin{proposition}\cite{TF2018}\label{characterization}
A triple packing contains no Pasches, grids or double triangles is $6$-even-free.
\end{proposition}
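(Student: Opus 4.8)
The plan is to classify every even configuration of size at most six that can occur in an arbitrary triple packing $(V,\mathcal{B})$ and to show that the complete list is: Pasches at size four, and grids or double triangles at size six. Since the hypothesis forbids exactly these, the packing will have no even $i$-configuration for $1\le i\le 6$, i.e. it is $6$-even-free. The organizing device is the $\mathbb{F}_2$-reformulation of evenness: identify each triple $B$ with its characteristic vector $\mathbf{1}_B\in\mathbb{F}_2^{V}$, so that a configuration $\mathcal{C}$ is even precisely when $\sum_{B\in\mathcal{C}}\mathbf{1}_B=\mathbf{0}$. Two consequences drive the argument. First, summing all coordinates shows an even $i$-configuration forces $3i$ to be even, so \emph{no even configuration of odd size exists}; among $1\le i\le 6$ only $i\in\{2,4,6\}$ need be examined. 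Second (a complementation principle), if $\mathcal{D}\subsetneq\mathcal{C}$ are both even then so is $\mathcal{C}\setminus\mathcal{D}$, because its vectors sum to $\sum_{\mathcal{C}}\mathbf{1}_B-\sum_{\mathcal{D}}\mathbf{1}_B=\mathbf{0}$; this explains why a large even configuration cannot secretly conceal a smaller one.

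I would first dispose of sizes two and four. For size two, two triples of a packing meet in at most one point, so at least four of their vertices have degree one and the configuration is not even. For size four, a vertex of degree four would lie in all four triples, whose remaining eight vertices are pairwise distinct by the packing property, each of degree one with no further triple available to fix them, a contradiction; hence every vertex of an even $4$-configuration has degree two, and there are exactly six vertices. Here I introduce the \textbf{intersection graph} of the configuration: its nodes are the triples and each shared point becomes an edge. Two-regularity of the points together with linearity of the packing makes this graph simple and cubic, with the points in bijection with its edges. On four nodes the only cubic simple graph is $K_4$, and reconstructing the hypergraph by ``points $=$ edges, triples $=$ vertex-stars'' recovers exactly the Pasch. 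Thus every even $4$-configuration is a Pasch.

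The heart of the proof is size six. The same packing-plus-counting reasoning excludes a vertex of degree six (its other twelve vertices would all be distinct of degree one) and of degree four (its four triples contribute eight distinct other vertices needing even degree, but only the two remaining triples, with just six vertex-slots, are available, leaving at least two vertices of odd degree). Hence all vertices have degree two, giving nine points and a simple cubic intersection graph on six nodes with nine edges. A disconnected cubic graph would require a cubic component on fewer than four vertices, which is impossible, so the graph is connected; and there are exactly two connected cubic graphs on six vertices, namely $K_{3,3}$ and the triangular prism. The reconstruction ``points $=$ edges, triples $=$ vertex-stars'' therefore yields exactly two hypergraphs, and a direct check identifies the $K_{3,3}$ case with the grid (lines being the three rows versus the three columns) and the prism case with the double triangle. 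Consequently every even $6$-configuration is a grid or a double triangle, and assembling the three cases proves that a packing avoiding Pasches, grids and double triangles is $6$-even-free.

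The main obstacle I anticipate lies in the size-six classification, and within it two delicate points. The exclusion of degree-four vertices is a tight counting argument (eight vertices competing for six slots) that must be stated precisely, since it is what collapses the analysis to the regular case. The clean passage to the intersection graph is equally important: I must verify that two-regularity of the points and linearity of the packing really do force a simple cubic graph whose edges biject with the points, so that the enumeration of cubic graphs on six vertices applies. The concluding checks that $K_{3,3}$ reconstructs to the grid and the prism to the double triangle are routine but should be written out explicitly to match the isomorphism types named in the statement.
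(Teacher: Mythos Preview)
Your proof is correct and complete. Note, however, that the paper does not actually prove this proposition: it is quoted from \cite{TF2018} without argument and used as a black box in the proof of Theorem~\ref{lbxcwithhet}. So there is no ``paper's own proof'' to compare against here.

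That said, your approach is a clean and standard way to carry out the classification that \cite{TF2018} performs. The $\mathbb{F}_2$ reformulation immediately dispatches odd sizes; the degree-counting arguments correctly force every vertex of an even $4$- or $6$-configuration to have degree exactly two (your exclusion of a degree-four vertex in the size-six case is the right tight count: eight new degree-one vertices cannot all be absorbed by the six slots in the two remaining triples); and passing to the intersection graph is legitimate because two-regularity of the points plus the packing condition make it simple and cubic. The enumeration of connected cubic graphs on four and six vertices ($K_4$; $K_{3,3}$ and the triangular prism) is well known, and your reconstruction ``points $=$ edges, triples $=$ vertex-stars'' does recover the Pasch, the grid, and the double triangle respectively. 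The connectedness step is also fine: two cubic components would each need at least four vertices, overshooting six. Nothing is missing.
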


\begin{proof}[Proof of Theorem \ref{lbxcwithhet}]
By Theorem \ref{revenfreeandxcode} and Proposition \ref{characterization}, we only need to construct a triple packing without Pasches, grids and double triangles.

Let $V$ be a finite set of points and $|V|=m$, take a subset $\mathcal{B}$ of triples by picking elements of ${V\choose 3}$ uniformly and independently at random with probability $p$.

Denote $D_2$ as the set of non-linear triple pairs in $\mathcal{B}$, i.e., for each $\{B_1,B_2\}\in D_2\subseteq {\mathcal{B}\choose 2}$, $|B_1\cap B_2|\geq 2$. Then we have
\begin{equation*}
\mathbb{E}[|D_2|]\leq {4\choose 3}^{2}\cdot p^2\cdot{|V|\choose 4}.
\end{equation*}

Denote $D_4$ as the set of Pasches, $D_{61}$ as the set of grids and $D_{62}$ as the set of double triangles in $\mathcal{B}$, we have
\begin{equation*}
\mathbb{E}[|D_4|]\leq 6!\cdot p^4\cdot{|V|\choose 6},
\end{equation*}
and
\begin{equation*}
{9\choose 3}\cdot{6\choose 3}\cdot p^6\cdot{|V|\choose 9}\leq\mathbb{E}[|D_{61}|],\mathbb{E}[|D_{62}|]\leq 9!\cdot p^6\cdot{|V|\choose 9}.
\end{equation*}

Let $Y=\{(\mathcal{C}_1,\mathcal{C}_2):~\mathcal{C}_1,\mathcal{C}_2\in D_{61}\sqcup D_{61}\text{~and~}|\mathcal{C}_1\cap \mathcal{C}_2|\geq 2\}$, then $Y=Y_1\sqcup Y_2\sqcup Y_3$, where $Y_1=Y\cap (D_{61}\times D_{61})$, $Y_2=Y\cap (D_{62}\times D_{62})$ and $Y_3=Y\cap (D_{61}\times D_{62}\cup D_{62}\times D_{61})$. Through a routine analysis about intersection patterns of pairs in $Y_i$, since $p\leq 1$, we have
\begin{equation*}
\begin{cases}
\mathbb{E}[|Y_1|]\leq c_{1}\cdot (p^{10}m^{13}+p^{9}m^{11}+p^{8}m^{10});\\
\mathbb{E}[|Y_2|]\leq c_{2}\cdot (p^{10}m^{13}+p^{9}m^{12}+p^{8}m^{10});\\
\mathbb{E}[|Y_3|]\leq c_{3}\cdot (p^{10}m^{13}+p^{9}m^{11}+p^{8}m^{10}),
\end{cases}
\end{equation*}
for three absolute constants $c_1,c_2,c_3$. This leads to
\begin{equation*}
\mathbb{E}[|Y|]\leq C_0\cdot (p^{10}m^{13}+p^{9}m^{12}+p^{8}m^{10}),
\end{equation*}
for some absolute constant $C_0\geq (c_1+c_2+c_3)$.

Now, take $\mathcal{H}$ as a random $6$-uniform hypergraph with vertex set $\mathcal{B}$ and hyperedge set
\begin{align*}
\mathcal{E}(\mathcal{H})=\{\{B_1,\ldots,B_6\}:~&\{B_1,\ldots,B_6\} \text{~forms a }\\
&\text{grid or a double triangle in~} \mathcal{B}\},
\end{align*}
and set $p=m^{-(\frac{9}{8}+\varepsilon)}$ for some $\varepsilon$ small enough such that $0<\varepsilon<\frac{3}{40}$.

Then, for $m$ large enough, we have
\begin{equation*}
\mathbb{E}[|D_2|],\mathbb{E}[|D_4|],\mathbb{E}[|Y|]\ll \mathbb{E}[|\mathcal{B}|].
\end{equation*}
Thus, with probability at least $\frac{3}{4}$, we can delete at most one triple from each non-linear pair, Pasch and $\mathcal{C}_1\cup\mathcal{C}_2$ for $(\mathcal{C}_1,\mathcal{C}_2)\in Y$, obtaining a linear induced $6$-uniform subhypergraph $\mathcal{H'}$ of $\mathcal{H}$ with at least $\frac{3}{4}\cdot|V(\mathcal{H})|$ vertices such that the vertex set of $\mathcal{H}'$ contains no non-linear triple pairs and Pasches.

Meanwhile, since
\begin{equation*}
\mathbb{E}[|V(\mathcal{H})|]=\mathbb{E}[|\mathcal{B}|]=(\frac{1}{6}-o(1))\cdot m^{\frac{15}{8}-\varepsilon}
\end{equation*}
and
\begin{equation*}
\frac{m^{\frac{9}{4}-6\varepsilon}}{532}\leq \mathbb{E}[|\mathcal{E}(\mathcal{H})|]=\mathbb{E}[|D_{61}\cup D_{62}|]\leq (2-o(1))\cdot m^{\frac{9}{4}-6\varepsilon},
\end{equation*}
by Chernoff bound, for $m$ large enough, we have
\begin{equation*}
\begin{cases}
\frac{m^{\frac{15}{8}-\varepsilon}}{12}\leq|V(\mathcal{H})|\leq \frac{m^{\frac{15}{8}-\varepsilon}}{3};\\
\frac{m^{\frac{9}{4}-6\varepsilon}}{10^3}\leq |\mathcal{E}(\mathcal{H})|\leq 3m^{\frac{9}{4}-6\varepsilon},
\end{cases}
\end{equation*}
with probability at least $\frac{7}{8}$. Therefore, the average degree of $\mathcal{H}$
\begin{equation*}
\bar{d}_{\mathcal{H}}\leq 216 m^{\frac{3}{8}-5\varepsilon}
\end{equation*}
with probability at least $\frac{7}{8}$.
%Thus, by the definition of the conditional probability and Markov's inequality,
%\begin{equation*}
%Pr[deg_{\mathcal{H}}(A)\geq 10^4\cdot m^{\frac{3}{8}-5\varepsilon}]\leq \frac{p}{32},
%\end{equation*}
%with probability at least $\frac{7}{8}$. Again,
Thus, by Markov's inequality, with probability at least $\frac{3}{4}$, the hypergraph $\mathcal{H}$ contains at most $\frac{1}{4}\cdot|V(\mathcal{H})|$ vertices of degree exceeding $10^4\cdot m^{\frac{3}{8}-5\varepsilon}$. Therefore, with probability at least $\frac{1}{2}$, we can delete these vertices and obtain a linear subhypergraph $\mathcal{H}''$ of $\mathcal{H}'$ with at least $(\frac{1}{24})\cdot m^{\frac{15}{8}-\varepsilon}$ vertices and maximum degree at most $10^4\cdot m^{\frac{3}{8}-5\varepsilon}$.

Finally, by Lemma \ref{DLR94}, we have
\begin{equation*}
\alpha(\mathcal{H}'')\geq c_0\cdot m^{\frac{9}{5}}(\log m)^{\frac{1}{5}},
\end{equation*}
for some absolute constant $c_0>0$. Since an independent set $I$ in $\mathcal{H}''$ is a triple packing that contains no Pasch, grid or double triangle, thus the above inequality guarantees the existence of a $6$-even-free triple packing of order $c_0\cdot m^{\frac{9}{5}}(\log m)^{\frac{1}{5}}$. This completes the proof.
\end{proof}

The above approach can also be applied to obtain general $r$-even-free triple packings.

Note that for any even $i$-configuration $\mathcal{C}$, $1\leq i\leq r$, we have
\begin{equation*}
\deg_{\mathcal{C}}(v)\equiv0 \mod 2,
\end{equation*}
for every $v\in V$. Since $(V,\mathcal{C})$ is a triple system, we also have
\begin{equation}\label{eq8}
\sum_{v\in V}\deg_{\mathcal{C}}(v)=3\cdot |\mathcal{C}|=3i.
\end{equation}
Thus, for odd $i$, an $i$-configuration $\mathcal{C}$ cannot be even, and for even $i$, an $i$-configuration $\mathcal{C}$ involves at most $\frac{3i}{2}$ points in $V$.

Now, take a triple packing $(V,\mathcal{B})$ as a $3$-uniform linear hypergraph with vertex set $V$, from the perspective of sparse hypergraphs, for even $i$, a $\mathcal{G}_3(\frac{3i}{2},i)$-free $3$-uniform linear hypergraph is a triple packing that contains no even $i$-configurations. Ranging $i$ from $1$ to $r$, we have the following proposition.

\begin{proposition}\label{prop1}
If a $3$-uniform linear hypergraph $\mathcal{H}$ is simultaneously $\mathcal{G}_3(\frac{3i}{2},i)$-free for every even $1\leq i\leq r$, then $\mathcal{H}$ is an $r$-even-free triple packing.
\end{proposition}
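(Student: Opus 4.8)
The plan is to show that, under the stated $\mathcal{G}_3$-free conditions, no configuration of size $i\le r$ can be even, by reducing to the two obstructions already identified: parity forces $i$ even, and then a sparsity condition forbids such a configuration. I would begin by recalling the two elementary facts established just before the statement: for any even $i$-configuration $\mathcal{C}$ in a triple packing, equation (\ref{eq8}) gives $\sum_{v}\deg_{\mathcal{C}}(v)=3i$, while evenness forces each $\deg_{\mathcal{C}}(v)$ to be a positive even integer; consequently $3i$ must be even, so $i$ is even, and moreover the number of vertices appearing in $\mathcal{C}$ is at most $\tfrac{1}{2}\sum_{v}\deg_{\mathcal{C}}(v)=\tfrac{3i}{2}$.

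Next I would argue by contradiction. Suppose $\mathcal{H}$ is simultaneously $\mathcal{G}_3(\tfrac{3i}{2},i)$-free for every even $i$ with $1\le i\le r$, yet $\mathcal{H}$ contains an even $j$-configuration $\mathcal{C}$ for some $1\le j\le r$. By the parity observation above, $j$ is even; by the vertex-count observation, the $j$ hyperedges of $\mathcal{C}$ together span at most $\tfrac{3j}{2}$ vertices, i.e. $\bigl|\bigcup_{B\in\mathcal{C}}B\bigr|\le\tfrac{3j}{2}$. But the definition of $\mathcal{G}_3(v,e)$-free says precisely that the union of any $e$ distinct hyperedges contains at least $v+1$ vertices; taking $e=j$ and $v=\tfrac{3j}{2}$, being $\mathcal{G}_3(\tfrac{3j}{2},j)$-free means every $j$ hyperedges span at least $\tfrac{3j}{2}+1$ vertices, contradicting the bound just obtained. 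Hence no even $j$-configuration with $1\le j\le r$ exists, which is exactly the definition of an $r$-even-free triple packing (noting that $\mathcal{H}$, being a $3$-uniform linear hypergraph, is a triple packing since linearity means any pair of points lies in at most one triple).

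I expect essentially no genuine obstacle here: the statement is a direct translation of the sparse-hypergraph language into the even-free-packing language, and all the arithmetic has already been done in equation (\ref{eq8}) and the surrounding discussion. The only point requiring a word of care is the boundary case — one should confirm that $\tfrac{3j}{2}$ is an integer (it is, since $j$ is even) so that the parameter of $\mathcal{G}_3(\cdot,\cdot)$ is well-defined, and that odd values of $i$ in the range $1\le i\le r$ need not be invoked at all since no odd configuration can be even. A remark could also be added that this is tight in the sense that the threshold $\tfrac{3i}{2}$ cannot be lowered, since a genuine even $i$-configuration attaining $\tfrac{3i}{2}$ vertices (such as a Pasch for $i=4$) exists.
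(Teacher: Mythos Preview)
Your proposal is correct and follows essentially the same approach as the paper: the paper does not give a separate proof environment for this proposition but derives it in the paragraphs immediately preceding it, using exactly the parity argument on $\sum_v \deg_{\mathcal{C}}(v)=3i$ (equation~(\ref{eq8})) to force $i$ even and $|\bigcup_{B\in\mathcal{C}}B|\le \tfrac{3i}{2}$, then reading off the contradiction with $\mathcal{G}_3(\tfrac{3i}{2},i)$-freeness. Your write-up is a faithful formalization of that discussion, with the added care of noting that linearity makes $\mathcal{H}$ a triple packing and that $\tfrac{3j}{2}$ is an integer.
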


Let $r'=\lfloor\frac{r}{2}\rfloor$ and $V$ be a finite set of points, consider a random triple system $(V,\mathcal{B})$ by picking elements of ${V\choose 3}$ uniformly and independently with a proper probability $p$. First, estimate the expectations of the number of non-linear triple pairs and the number of forbidden $\mathcal{G}_3(\frac{3i}{2},i)$s for every even $1\leq i\leq r$. Then, construct a $2r'$-uniform random hypergraph with the set of triples $\mathcal{B}$ as its vertex set such that any $2r'$ triples form a hyperedge if and only if they involve at most $3r'$ points in $V$. Using a similar probabilistic hypergraph independent set approach as that for Theorem \ref{lbxcwithhet}, one can obtain the following theorem.

\begin{theorem}\label{lbgr-even-free}
For sufficiently large $m$, there exists an $r$-even-free triple packing $\mathcal{B}$ of order $m$ such that
\begin{equation*}
|\mathcal{B}|=\Omega(m^{\frac{3r'}{2r'-1}}(\log m)^{\frac{1}{2r'-1}}),
\end{equation*}
where $r'=\lfloor\frac{r}{2}\rfloor$.
\end{theorem}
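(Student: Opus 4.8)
The plan is to mimic exactly the probabilistic hypergraph independent set argument used in the proof of Theorem~\ref{lbxcwithhet}, but with the parameters adjusted to the general case. First I would invoke Proposition~\ref{prop1}, which reduces the problem to constructing a $3$-uniform linear hypergraph on $m$ vertices that is simultaneously $\mathcal{G}_3(3i/2,i)$-free for every even $i$ with $2\le i\le r$, having the claimed number of edges. Set $r'=\lfloor r/2\rfloor$. I would start by sampling $\mathcal{B}\subseteq\binom{V}{3}$ by including each triple independently with probability $p$, so $\mathbb{E}[|\mathcal{B}|]=p\binom{m}{3}=\Theta(pm^3)$. For each even $i\in\{2,\dots,r\}$ let $D_i$ be the family of $i$-subsets of $\mathcal{B}$ spanning at most $3i/2$ points of $V$; a standard first-moment count gives $\mathbb{E}[|D_i|]=O\bigl(p^i m^{3i/2}\bigr)$, with the implied constant depending only on $r$. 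The constraint for $i\le 2r'-2$ (i.e.\ all even $i$ strictly below $2r'$) is that these terms be $o(\mathbb{E}[|\mathcal{B}|])$; since $p^i m^{3i/2}/(pm^3)=p^{i-1}m^{3(i-2)/2}$, choosing $p=\Theta\bigl(m^{-3/2}(\log m)^{\ast}\bigr)$ up to the logarithmic factor makes all of these negligible, exactly as in the $r=6$ case where only $D_2$ and $D_4$ needed to be killed outright.

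Next I would handle the top configuration $i=2r'$ not by deletion but by passing to a $2r'$-uniform auxiliary hypergraph $\mathcal{H}$ on vertex set $\mathcal{B}$, whose hyperedges are the $2r'$-subsets of $\mathcal{B}$ spanning at most $3r'$ points of $V$ (these are precisely the would-be $\mathcal{G}_3(3r',2r')$-configurations). A first-moment estimate gives $\mathbb{E}[|V(\mathcal{H})|]=\Theta(pm^3)$ and $\mathbb{E}[|\mathcal{E}(\mathcal{H})|]=\Theta\bigl(p^{2r'}m^{3r'}\bigr)$, and for a fixed triple $A$, $\mathbb{E}[\deg_{\mathcal{H}}A]=\Theta\bigl(p^{2r'-1}m^{3r'-3}\bigr)$. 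As in Theorem~\ref{lbxcwithhet}, I would first delete one triple from each non-linear pair and from each lower even configuration $D_i$, $i<2r'$, so that the surviving vertex set of $\mathcal{H}$ is a linear triple packing free of all even $i$-configurations for $i<2r'$; Chernoff and Markov bounds ensure this removes only a $o(1)$ (in fact a bounded) fraction. Then I would delete the vertices of $\mathcal{H}$ of degree exceeding a constant multiple of its expected degree $D:=p^{2r'-1}m^{3r'-3}$, again losing only a constant fraction by Markov, to obtain a linear $2r'$-uniform subhypergraph $\mathcal{H}''$ with $v(\mathcal{H}'')=\Theta(pm^3)$ and $\Delta(\mathcal{H}'')\le O(D)$.

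Finally I would apply Lemma~\ref{DLR94} (the Duke--Lefmann--R\"odl bound) to $\mathcal{H}''$, which is $2r'$-uniform and linear, giving an independent set of size
\begin{equation*}
\alpha(\mathcal{H}'')\ \ge\ c\cdot v(\mathcal{H}'')\cdot\Bigl(\frac{\log D}{D}\Bigr)^{\frac{1}{2r'-1}}.
\end{equation*}
An independent set here is a sub-triple-packing with no even $i$-configuration for any even $i\le 2r'$, hence an $r$-even-free triple packing. Substituting $v(\mathcal{H}'')=\Theta(pm^3)$ and $D=\Theta(p^{2r'-1}m^{3r'-3})$ gives a bound of the form $\Theta\bigl(p m^3\cdot(p^{1-2r'}m^{3-3r'}\log D)^{1/(2r'-1)}\bigr)=\Theta\bigl(m^{3-\frac{3(r'-1)}{2r'-1}}\cdot p^{\frac{2r'}{2r'-1}-1+\!?}\ldots\bigr)$; the exponent of $p$ works out so that the choice $p=\Theta\bigl(m^{-3/2+1/(2r'-1)}\bigr)$ — the largest value still making the sub-$2r'$ configurations $o(\mathbb{E}|\mathcal{B}|)$ — balances the terms and yields $|\mathcal{B}|=\Omega\bigl(m^{\frac{3r'}{2r'-1}}(\log m)^{\frac{1}{2r'-1}}\bigr)$. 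The main obstacle is the bookkeeping in this last optimization: one must verify simultaneously that (i) at the optimal $p$ every $D_i$ with $i<2r'$ is genuinely lower-order than $\mathbb{E}[|\mathcal{B}|]$ so the deletions cost only a constant fraction, (ii) the Chernoff/Markov concentration steps all go through with the stated probabilities, and (iii) the exponent arithmetic in the Duke--Lefmann--R\"odl substitution collapses exactly to $\frac{3r'}{2r'-1}$ and $\frac{1}{2r'-1}$. Everything else is a direct parametric generalization of the $r=6$ argument.
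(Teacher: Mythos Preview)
Your plan is essentially the same as the paper's: the paper does not write out a full proof of this theorem but only sketches it in the paragraph preceding the statement, and that sketch is exactly what you describe --- sample triples at random, kill the small even configurations by deletion, build a $2r'$-uniform auxiliary hypergraph on the surviving triples whose hyperedges are the would-be $\mathcal{G}_3(3r',2r')$ configurations, trim to bounded degree, and apply Lemma~\ref{DLR94}.

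Two places where your write-up would need tightening. First, the \emph{linearity} you need for Lemma~\ref{DLR94} is linearity of the auxiliary hypergraph $\mathcal{H}$ (any two $2r'$-configurations share at most one triple), not just linearity of the underlying triple packing; in the $r=6$ proof this is the separate $Y$-step, and deleting non-linear triple pairs, lower $D_i$'s, and high-degree vertices does not by itself give it. Second, your remark that a particular $p$ ``balances the terms'' is off: in the substitution
\[
v(\mathcal{H}'')\Bigl(\frac{\log D}{D}\Bigr)^{\frac{1}{2r'-1}}\ =\ \Theta\!\left(pm^{3}\cdot\bigl(p^{2r'-1}m^{3r'-3}\bigr)^{-\frac{1}{2r'-1}}(\log m)^{\frac{1}{2r'-1}}\right)\ =\ \Theta\!\left(m^{\frac{3r'}{2r'-1}}(\log m)^{\frac{1}{2r'-1}}\right),
\]
the powers of $p$ cancel completely. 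The only constraints on $p$ are the side conditions (each $\mathbb{E}|D_i|=o(\mathbb{E}|\mathcal{B}|)$ for even $i<2r'$, $\mathbb{E}|Y|=o(\mathbb{E}|\mathcal{B}|)$, and $D\to\infty$ so that $\log D=\Theta(\log m)$); your proposed value $p=\Theta(m^{-3/2+1/(2r'-1)})$ actually gives $D=\Theta(m^{-1/2})\to 0$, so you would want a somewhat larger $p$, as in the paper's choice $p=m^{-9/8-\varepsilon}$ for the $r'=3$ case.
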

Combining the above result with Theorem \ref{revenfreeandxcode}, we immediately have
\begin{corollary}\label{lbgxcwithhet}
For sufficiently large $m$, there exists an $(m,\Omega(m^{\frac{3r'}{2r'-1}}(\log m)^{\frac{1}{2r'-1}}),1,2)$ $X$-code of constant weight $3$ that is also an $(m,\Omega(m^{\frac{3r'}{2r'-1}}(\log m)^{\frac{1}{2r'-1}}),3,1)$ $X$-code and an $(m,\Omega(m^{\frac{3r'}{2r'-1}}(\log m)^{\frac{1}{2r'-1}}),r,0)$ $X$-code, where $r'=\lfloor\frac{r}{2}\rfloor$.
\end{corollary}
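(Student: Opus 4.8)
The plan is to obtain this corollary by directly chaining the two preceding results: Theorem~\ref{lbgr-even-free}, which guarantees the existence of a large $r$-even-free triple packing, and Theorem~\ref{revenfreeandxcode}, which converts any such packing into the three types of $X$-codes claimed here. Since all of the combinatorial work has already been carried out in establishing those two theorems, the only task remaining is to verify that their parameters match and that the hypotheses of Theorem~\ref{revenfreeandxcode} are satisfied.

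First I would fix $r\geq 4$ and let $m$ be sufficiently large. Applying Theorem~\ref{lbgr-even-free}, I obtain an $r$-even-free triple packing $(V,\mathcal{B})$ of order $m$, so that $|V|=m$ and
\begin{equation*}
|\mathcal{B}|=\Omega\!\left(m^{\frac{3r'}{2r'-1}}(\log m)^{\frac{1}{2r'-1}}\right),
\end{equation*}
where $r'=\lfloor r/2\rfloor$. Because $r\geq 4$, the hypothesis $r\geq 4$ of Theorem~\ref{revenfreeandxcode} is met, so that theorem applies to $(V,\mathcal{B})$ verbatim.

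Next I would invoke Theorem~\ref{revenfreeandxcode} with this packing. It asserts that the set of indicator vectors of the triples in $\mathcal{B}$ simultaneously forms a $(|V|,|\mathcal{B}|,1,2)$ $X$-code of constant weight $3$, a $(|V|,|\mathcal{B}|,3,1)$ $X$-code, and a $(|V|,|\mathcal{B}|,r,0)$ $X$-code. Substituting $|V|=m$ and the above lower bound on $|\mathcal{B}|$ into these three statements yields precisely the three $X$-codes claimed in the corollary, all of length $m$ and all with the same number of codewords
\begin{equation*}
\Omega\!\left(m^{\frac{3r'}{2r'-1}}(\log m)^{\frac{1}{2r'-1}}\right).
\end{equation*}

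There is no genuine obstacle here beyond bookkeeping: the entire difficulty resides in Theorem~\ref{lbgr-even-free}, whose probabilistic hypergraph-independent-set construction is what produces the $(\log m)^{1/(2r'-1)}$ improvement factor. The one point that warrants care is the boundary condition on $r$, namely that Theorem~\ref{revenfreeandxcode} requires $r\geq 4$, so the corollary should be read under this assumption (the exponent $\frac{3r'}{2r'-1}$ is already well defined and positive once $r'\geq 1$, i.e.\ $r\geq 2$, but the $X$-code conversion itself needs $r\geq 4$). As a consistency check, taking $r=6$ gives $r'=3$ and recovers the bound $m^{9/5}(\log m)^{1/5}$ of Theorem~\ref{lbxcwithhet}, confirming that this corollary is the correct generalization of that special case.
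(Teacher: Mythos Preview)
Your proposal is correct and matches the paper's approach exactly: the paper states this corollary as an immediate consequence of combining Theorem~\ref{lbgr-even-free} with Theorem~\ref{revenfreeandxcode}, which is precisely the chaining you describe. Your additional remarks on the $r\geq 4$ hypothesis and the $r=6$ sanity check are accurate and helpful.
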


\begin{remark}\label{remark4}
A little different from the case $r=6$, for general $r$, we can not fully characterize the specific even configurations that shall be forbidden to obtain an $r$-even-free triple packing. Thus, a stronger restriction has been required in Proposition \ref{prop1}.

%Recently, Shangguan and Tamo \cite{ST2019} used a similar method and improved the lower bound on the maximum number of hyperedges in a $\mathcal{G}_k(v,e)$-free $k$-uniform hypergraph on $m$ vertices. It is worth noting that the hypergraph they constructed is universally $\mathcal{G}_k(ik-\lceil\frac{(i-1)(ek-v)}{e-1}\rceil,i)$-free for every $2\leq i\leq e$, which is a sparser structure than that in Proposition \ref{prop1} when $k=3$. Therefore, Theorem \ref{lbgr-even-free} can also be viewed as an application of Theorem 3 in \cite{ST2019}.
\end{remark}

\section{Concluding Remarks and Further Research}
In this paper, we investigate the maximum number $M_w(m,d,x)$ of an $X$-code of constant weight $w$ with testing quality parameters $d$ and $x$. We obtain general lower and upper bounds for $M_w(m,d,x)$ and further improve the lower bound for the case with $w=3$ and $x=2$. Using tools from additive combinatorics and finite fields, we also obtain some explicit constructions for cases $d=3,7$ and $x=2$, which improve the corresponding general lower bounds. Moreover, we study a special class of $(m,n,1,2)$ $X$-codes of constant weight $3$ which can also detect many erroneous bits if there is at most one $X$.

We summarize our lower bounds for $M_w(m,d,x)$ in Table \ref{table1}, and for convenience, we also include the best known corresponding upper bounds.

\begin{table*}
\caption{Upper and lower bounds for $M_w(m,d,x)$}
\label{table1}
\centering
\begin{tabular}{cccc}
  \toprule
  & Lower Bounds~ & ~Upper Bounds \\
  \midrule
  $M_w(m,d,x)$ & $(1-o(1))\frac{{m\choose {\lceil w/(x+d-1)\rceil}}}{{w\choose {\lceil w/(x+d-1)\rceil}}}$~ & ~$\frac{{{m}\choose \lceil\frac{w}{x}\rceil}}{{{w-1}\choose {\lceil\frac{w}{x}\rceil-1}}}$ (see~(\ref{eq51}) in Theorem \ref{upbforcwxc})\\
  \midrule
  $M_3(m,d,2)$ & $\Omega(m^{\frac{9}{7}})$~ & ~$o(m^2)$ (see \cite{TF2018})\\
  \midrule
  $M_w(m,3,2)~(w\geq 4)$ & $\Omega(m^{2})$~ & ~$O(m^{{\lceil w/2\rceil}})$ (see~(\ref{eq6}) in Section III.A)\\
  \midrule
  $M_3(m,3,2)$ & $\Omega(m^{2-\varepsilon})$~ & ~$O(m^{2})$ (see~(\ref{eq6}) in Section III.A)\\
  \midrule
  $M_3(m,7,2)$ & $\Omega(m^{\frac{3}{2}})$~ & ~$o(m^2)$ (see \cite{TF2018})\\
  \bottomrule
\end{tabular}
\end{table*}

%\begin{center}
%\begin{tabular}{cccc}
%  \toprule
%  & Lower Bounds~ & ~Upper Bounds \\
%  \midrule
%  $M_w(m,d,x)$ & $(1-o(1))\frac{{m\choose {\lceil w/(x+d-1)\rceil}}}{{w\choose {\lceil w/(x+d-1)\rceil}}}$~ & ~$\frac{{{m}\choose \lceil\frac{w}{x}\rceil}}{{{w-1}\choose {\lceil\frac{w}{x}\rceil-1}}}$ (see~(\ref{eq51}) in Theorem \ref{upbforcwxc})\\
%  \midrule
%  $M_3(m,d,2)$ & $\Omega(m^{\frac{9}{7}})$~ & ~$o(m^2)$ (see \cite{TF2018})\\
%  \midrule
%  $M_w(m,3,2)~(w\geq 4)$ & $\Omega(m^{2})$~ & ~$O(m^{{\lceil w/2\rceil}})$ (see~(\ref{eq6}) in Section III.A)\\
%  \midrule
%  $M_3(m,3,2)$ & $\Omega(m^{2-\varepsilon})$~ & ~$O(m^{2})$ (see~(\ref{eq6}) in Section III.A)\\
%  \midrule
%  $M_3(m,7,2)$ & $\Omega(m^{\frac{3}{2}})$~ & ~$o(m^2)$ (see \cite{TF2018})\\
%  \bottomrule
%\end{tabular}
%\end{center}

Although many works have been done about bounding $M_w(m,d,x)$, in most cases, the gaps between the upper bounds and the lower bounds are still quite large. For cases $d=3$, $x=2$ and $w=3$, constructions given by Theorem \ref{lwbforcwxc} narrow the gaps between the upper bounds and the lower bounds to an $\varepsilon$ over the exponent. We expect methods from other aspects can provide some better constructions.

\section*{Acknowledgements}
The authors express their gratitude to the two anonymous reviewers for their detailed and constructive
comments which are very helpful to the improvement of the presentation of this paper, and to Prof. Chaoping
Xing, the associate editor, for his excellent editorial job.

\end{document}